\def\Title{Quantum Disturbance without State Change:\\
Soundness and Locality of Disturbance Measures}
\def\Author{Masanao Ozawa}
\newtheorem{Theorem}{Theorem}
\title{\Large\bf \Title} 
\author{\Author$^{1,2,}$\thanks{\em\small  E-mail:  ozawa@is.nagoya-u.ac.jp.}
\\
\\
{\em\small 
{}$^1$ Center for Mathematical Science and Artificial Intelligence, 
Academy of Emerging Sciences,}\\
{\em\small 
Chubu University, 1200 Matsumoto-cho, Kasugai 487-8501, Japan}\\
{\em\small 
{}$^2$ Graduate School of Informatics, Nagoya University, 
Chikusa-ku, Nagoya 464-8601, Japan}\\
}
\date{}
\begin{document} 

\maketitle 
\thispagestyle{headings}
\noindent
\begin{abstract}
It is often supposed that a quantum system is not disturbed without state change.
In a recent debate, this assumption is used to claim that the operator-based
disturbance measure, a broadly used disturbance measure, has an unphysical property.
Here, we show that a quantum system possibly incurs an operationally detectable 
disturbance without state change to rebut the claim. 
Moreover, we establish the reliability, formulated as soundness and locality, 
of the operator-based disturbance measure, which, we show, quantifies the disturbance 
on an observable that manifests in the time-like correlation even in the case 
where its probability distribution does not change.  
\end{abstract}

\section{Introduction}
Heisenberg's error-disturbance relation (EDR)
\begin{equation}\label{eq:Hei27}
\varepsilon(A)\eta(B)\ge\frac{1}{2}|{\langle {[A,B]} \rangle}|
\end{equation}
for the mean error $\varepsilon(A)$ of a measurement of an observable $A$ in any state 
and the mean disturbance $\eta(B)$ caused on an observable $B$,
originally introduced by the $\gamma$-ray microscope thought experiment \cite{Hei27},  
has been commonly believed as a dynamical aspect of Heisenberg's uncertainty principle, 
which is formally represented by a rigorously proven relation
\begin{equation}\label{eq:Rob29}
\sigma(A)\sigma(B)\ge\frac{1}{2}|{\langle {[A,B]} \rangle}|
\end{equation}
for the indeterminacies, defined as the standard deviations $\sigma(A),\sigma(B)$, 
of arbitrary observables $A,B$ 
in any state \cite{Hei27,Ken27,Rob29}. 
There have been longstanding research efforts to prove Heisenberg's EDR 
\cite{AK65,YH86, AG88,Ish91,91QU}, while the universal validity has not been reached.
Instead, a recent study  \cite{03UVR,04URN} revealed a
universally valid form of EDR
\begin{equation}\label{eq:UVEDR}
\varepsilon(A)\eta(B)+\varepsilon(A)\sigma(B)+\sigma(A)\eta(B)\ge\frac{1}{2}|{\langle {[A,B]} \rangle}|,
\end{equation}
where $\sigma(A)$ and $\sigma(B)$ are the standard deviations just before the
measurement, 
and made Heisenberg's EDR testable  \cite{04URN,LW10}
 to observe its violations, confirming the new relation as well
 \cite{12EDU,RDMHSS12,13VHE}.
Subsequently, stronger EDRs were derived \cite{Bra13,Bra14,14EDR,19A1}, and
 confirmed experimentally
\cite{13EVR,WHPWP13,14A1,RBBFBW14,16A3,Liu19}.
 
In order to define the error $\varepsilon(A)$ and disturbance $\eta(B)$ in Eq.~(\ref{eq:UVEDR}),
we suppose that the measurement $\mathbf{M}$ of $A$ is described by an interaction from time 
$t=0$ to $t=\tau$ between the system $\mathbf{S}$ in a state $|{\psi}\rangle$ and the probe $\mathbf{P}$ 
 prepared in a fixed state $|{\xi}\rangle$, and that the outcome of the measurement is obtained by the measurement 
 of  the meter observable $M$ in the probe $\mathbf{P}$ at time $t=\tau$.
 \footnote{
 Note that this general description of a measuring process, also called an indirect measurement model \cite{04URN}, 
 is introduced and proved in Ref.~\cite{84QC}
 to be equivalent to the most general description using a completely positive instrument, 
 or a so-called quantum instrument, which is a reformulation of the Davies-Lewis instrument \cite{DL70}
 with the additional requirement of complete positivity. 
 }
 
In the Heisenberg picture, we shall write $X(0)=X\otimes I$, $X(\tau)=U^{\dagger}X(0)U$,
$Y(0)=I\otimes Y$, $Y(\tau)=U^{\dagger}Y(0)U$ for observables $X$ in $\mathbf{S}$
 and $Y$ in $\mathbf{P}$,
where  $U$ is the unitary evolution operator for $\mathbf{S}+\mathbf{P}$ from $t=0$ to $t=\tau$.
The error $\varepsilon(A)=
\varepsilon_{O}(A,\mathbf{M},|{\psi}\rangle)$ and
disturbance  $\eta(B)=
\eta_{O}(B,\mathbf{M},|{\psi}\rangle)$ in Eq.~(\ref{eq:UVEDR}) are defined 
by
\begin{align}{
\varepsilon_{O}(A,\mathbf{M},|{\psi}\rangle)&=\langle{\psi,\xi|[M(\tau)-A(0)]^2|\psi,\xi}\rangle^{1/2},\\
\eta_{O}(B,\mathbf{M},|{\psi}\rangle)&=\langle{\psi,\xi|[B(\tau)-B(0)]^2|\psi,\xi}\rangle^{1/2}.\label{eq:O-dist}
}\end{align}
See Ref.~\cite{04URN} for details.
We call $\varepsilon_{O}$ and $\eta_{O}$ as the {\em operator-based 
error measure} 
and the {\em operator-based disturbance measure}.
We shall write $\varepsilon_{O}(A)=\varepsilon_{O}(A,\mathbf{M},|{\psi}\rangle)$ and 
$\eta_{O}(B)=\eta_{O}(B,\mathbf{M},|{\psi}\rangle)$ when no confusion may occur.

In the previous work \cite{19A1}, we have investigated the properties of the operator-based error measure
(called therein as noise-operator-based quantum root-mean-square error)
$\varepsilon_{O}$, and we have introduced  its completion $\overline{\varepsilon}$,
the locally uniform quantum root-mean-square error,
and subsequently we have experimentally tested \cite{21A7} the completeness of 
$\varepsilon_{O}$ and $\overline{\varepsilon}$ to show how hidden error in  $\varepsilon_{O}$ 
manifests in the defining procedure of $\overline{\varepsilon}$.

In the present work, we focus on the properties, soundness and locality, of the operator-based 
disturbance measure $\eta_{O}$, where soundness generally requires a disturbance measure 
to assign the value 0 to ``non-disturbing'' measurements, and locality generally requires 
a disturbance measure to assign the value 0 to ``non-disturbing'' local measurements.

We say that a measurement is {\em distributionally non-disturbing} to an
observable $B$ in the system state $|{\psi}\rangle$ 
if $B(0)$ and $B(\tau)$  have identical probability distributions 
in the initial state $|{\psi,\xi}\rangle$.
Korzekwa, Jennings, and Rudolph \cite{KJR14} criticized the use of the 
operator-based disturbance measure, based on the following requirement 
for disturbance measures.  

\sloppy
{\em  Distributional requirement (DR) for disturbance measures.}
{Any disturbance measure should assign the value 0 to distributionally
non-disturbing measurements.}\footnote{
Note that KJR \cite{KJR14} called distributionally non-disturbing
measurements as ``operationally non-disturbing measurements''; 
see Eq.~(2) in KJR \cite{KJR14}. }

KJR \cite{KJR14} called the DR  ``the commonly accepted and operationally 
motivated requirement that all physically meaningful notions of disturbance should
satisfy".  They claimed that the operator-based disturbance measure 
does not satisfy the DR and has even an `unphysical' property,
since it takes a positive value for a measurement that does not change the state at all. 
Further, they concluded that state-dependent formulations of EDRs 
are not tenable.

In this paper, we examine the validity of the DR.
For this purpose, we consider a more fundamental principle in quantum mechanics, 
the correspondence principle, stating that if the classical description is available, 
quantized concepts should be consistent with the classical description.  
We argue that the DR violates the correspondence principle. 
We generally show that even if the measurement does not change the state, 
the disturbance is operationally detectable as long as the operator-based 
disturbance measure takes a positive value.  
Thus, the claims made by KJR are groundless.
The DR requires that disturbance measures only count the change of the
probability distribution in time, but according to the correspondence principle, 
valid disturbance measures should also count the change of the observable 
that manifest in the time-like correlation, as the operator-based disturbance 
measure does.

Moreover, we show that the DR violates another fundamental requirement
for no-signaling under local operations, called the locality requirement.  
Subsequently, we show that disturbance measures satisfying the DR 
cannot be used to demonstrate the security of quantum 
cryptography, because they do not properly describe the disturbance caused 
by the eavesdropper.
In contrast, we show that the operator-based disturbance measure satisfies the
correspondence principle and the locality requirement.
Based on those arguments, we shall conclude that state-dependent formulations of EDRs 
based on the operator-based disturbance measure reliably represent 
the originally motivated dynamical aspect of Heisenberg's uncertainty principle.

\section{Correspondence principle}  
The correspondence principle generally states that quantum theory should be consistent
with classical theories in the case where the classical descriptions are also available.\footnote{
``The term [the correspondence principle] codifies the idea that a new theory should reproduce 
under some conditions the results of older 
well-established theories in those domains where the old theories work [Wikipedia \url{https://en.wikipedia.org/wiki/
Correspondence_principle} (August 1, 2022)]''.
}
In fact, it is a common practice to apply classical 
descriptions to commuting observables through their joint probability 
distributions. 
In Ref.~\cite{19A1} we consider the correspondence principle as a requirement 
for error measures.  Here, we extend the consideration to disturbance measures.

It is well-known that any commuting observables $X,Y$ have their joint probability 
distribution in any state.
Here, for a given state $|{\Psi}\rangle$, the {\em joint probability distribution (JPD)} 
of any two observables $X,Y$ 
is defined as a 2-dimensional probability distribution $\mu(u,v)$ satisfying 
\begin{align}\label{eq:JPD-CP}
\langle{\Psi|f(X,Y)|\Psi}\rangle=\sum_{u,v}\, f(u,v)\mu(u,v)
\end{align}
for every (non-commutative) polynomial $f(X,Y)$ of $X$ and $Y$. 
In general, two observables $X,Y$ have their JPD in a state $|{\Psi}\rangle$ 
if and only if they commute in $|{\Psi}\rangle$ in the sense that
\begin{align}
P^{X}(u)P^{Y}(v)|\Psi\rangle=P^{Y}(v)P^{X}(u)|\Psi\rangle,
\end{align}
where $P^{X}(u)$ and $P^{Y}(v)$ are the spectral projections of $X$ and $Y$
 (Ref.~\cite{19A1}, Theorem 1).
In this case, the JPD $\mu$ is uniquely determined by
\begin{align}{
\mu(u,v)=\langle{\Psi|P^{X}(u)P^{Y}(v)|\Psi}\rangle.
}\end{align}
The JPD $\mu$ determines the {\em (classical) root-mean-square deviation} 
$\delta_G(\mu)$ between the classical random variables ${\bf u}=u$ and ${\bf v}=v$,
the notion originally introduced by Gauss \cite{Gau21-}, by
\begin{align}{ 
\delta_G(\mu)=\left(\sum_{u,v}\,(u-v)^2 \mu(u,v)\right)^{1/2}.
\label{eq:CP}}
\end{align}

We say that a disturbance measure $\eta$ satisfies 
the {\em correspondence principle (CP)} if $\eta(B)=\delta_G(\mu)$ provided that  $B(\tau)$ 
and $B(0)$ have their JPD $\mu$ in the initial state $|{\psi,\xi}\rangle$.
An important property of the operator-based disturbance measure $\eta_{O}$ is that 
it satisfies the CP, as easily follows from Eq.~(\ref{eq:JPD-CP}). 
Similarly, the operator-based error measure $\varepsilon_{O}$ also satisfies CP 
in the sense that $\varepsilon_{O}(A)=\delta_G(\mu)$ provided that  $M(\tau)$ and 
$A(0)$ have their JPD $\mu$ in the initial state $|{\psi,\xi}\rangle$ as shown in 
Ref.~\cite{19A1}.

If $B(0)$ and $B(\tau)$ have their JPD $\mu$, the correlation between $B(0)$ and $B(\tau)$ has
the classical picture described by $\mu$, and the classical notion $\delta_G(\mu)$
of the root-mean-square deviation is applicable to quantifying the disturbance of $B$.  
In this case, according to the correspondence principle, any quantum definition of a disturbance
measure $\eta$ should be consistent with the classical measure $\delta_G$.  
Thus, we say that a quantum disturbance measure $\eta$ satisfies the correspondence principle
if two measures, the quantum $\eta$ and the classical $\delta_G$, are consistent, 
whenever the classical picture is available, as a desirable property of a quantum disturbance measure.
In this sense,  the correspondence principle determines the value of the disturbance on $B$, 
when the joint probability distribution of $B(0)$ and $B(\tau)$ exists,
in an analogous way as the probability distribution of $B(0)$ determines its standard deviation 
$\sigma(B)=\sigma(B(0))$ appearing in Eq.~(\ref {eq:UVEDR}).

\section{Disturbing observables without state change}\label{se:3DWOSC}

KJR \cite{KJR14}\  identified as `unphysical' the property of the operator-based disturbance 
measure $\eta_{O}$ that it does not assign the value 0 in  a case where 
the state has not changed at all.  
In such a case, the probability distribution of every observable has not changed,
so that this is a stronger violation of the DR.
However, we shall show here that this is not a peculiarity of the operator-based 
disturbance measure, but a straightforward consequence of the CP.

Consider a qubit measurement.
The projective measurement of $A=\sigma_z$
in the state $|{0}\rangle:=|{\sigma_z=+1}\rangle$ does not change the initial state $|{\psi}\rangle=|{0}\rangle$.
In this case, it was shown  \cite{05UUP} that the operator-based disturbance measure indicates that 
$B=\sigma_x$ is disturbed by the amount $\eta_{O}(\sigma_x)=\sqrt{2}$,
and this value was actually obtained by a neutron optical experiment \cite{13EVR}.
However,  according to the DR, every disturbance measure $\eta$ should assign the value 0,
and KJR \cite{KJR14} identified the above property of $\eta_{O}$ as a very unphysical property. 
In contrast, we shall show that every disturbance measure $\eta$ satisfying the CP
assigns the value $\sqrt{2}$.

\begin{figure}[h]
\begin{center}
\includegraphics[width=0.7\textwidth]{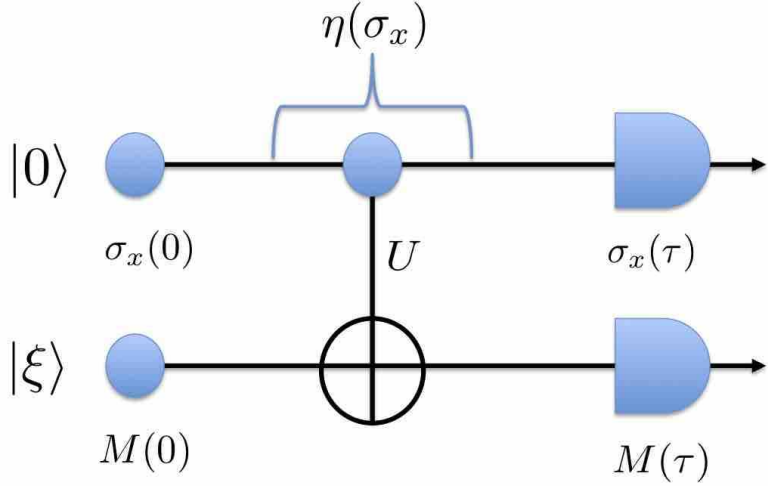}
\caption{
{\bf  Distributionally non-disturbing measurements are disturbing according to the correspondence principle.}
A projective measurement of $\sigma_z$ in $|{\psi}\rangle=|{0}\rangle$ is probability
non-disturbing to the observable $\sigma_x$. 
Thus, the DR  requires any disturbance measure to assign the value $0$.
However,  the CP\  requires any disturbance measure to assign the value $\sqrt{2}$.
}
\label{fig1}
\end{center}
\end{figure}

It is well-known that the projective measurement of $\sigma_z$ is carried out by
the controlled-NOT operation 
\begin{align}{U=|{0}\rangle\langle{0}|\otimes I +|{1}\rangle\langle{1}|\otimes \sigma_x}\end{align}
for the measured qubit $\mathbf{S}$ and the probe qubit $\mathbf{P}$ prepared
in the fixed state $|{\xi}\rangle=|{0}\rangle$ from $t=0$ to $t=\tau$
and by the subsequent meter measurement for $M=\sigma_z$ in $\mathbf{P}$ (Figure~\ref{fig1}) .
The Schr\"{o}dinger time evolution satisfies
\begin{align}
U(|{0}\rangle\otimes|{0}\rangle)&=|{0}\rangle\otimes|{0}\rangle,\\
U(|{1}\rangle\otimes|{0}\rangle)&=|{1}\rangle\otimes|{1}\rangle.
\end{align}
For $B=\sigma_x$ the Heisenberg time evolution is given by 
\begin{align}
\sigma_x(0)&=\sigma_x\otimes I,   \label{eq:Xzero}.\\
\sigma_x(\tau)&=\sigma_x\otimes\sigma_x \label{eq:Xtau}.
\end{align}
Here, Eq.~(\ref{eq:Xtau}) follows from
\begin{align*}
U^{\dagger}(\sigma_x\otimes I)U=|1\rangle\langle 1|\sigma_x|0\rangle\langle 0|\otimes\sigma_x
+|0\rangle\langle 0|\sigma_x|1\rangle\langle 1|\otimes\sigma_x
=\sigma_x\otimes\sigma_x .
\end{align*}
It follows that  $\sigma_x(\tau)$ and $\sigma_x(0)$ commute and they have the JPD $\mu(u,v)$ 
in the state  $|{\psi,\xi}\rangle=|{0,0}\rangle$ as
\begin{align}
\mu(u,v)&=\langle{0,0|P^{\sigma_x(\tau)}(u)P^{\sigma_x(0)}(v)|0,0}\rangle\nonumber\\
&=\langle{0,0|P^{\sigma_x\otimes\sigma_x}(u)P^{\sigma_x\otimes I}(v)|0,0}\rangle.
\end{align}
Then  we obtain
\begin{equation}\label{eq:DWOSC}
\mu(u,v)=\frac{1}{4}
\end{equation}
(cf.~Section~\ref{se:DWOSC}).
Thus, if the disturbance measure $\eta$ satisfies the CP, 
we have 
\begin{align}{
\eta(\sigma_x)^2=\delta_{G}(\mu)^2=\sum_{u,v=\pm 1}(u-v)^2\mu(u,v)
=2.
}\end{align}
Therefore we conclude $\eta(\sigma_x)=\sqrt{2}$.
Thus, the non-zero value $\eta_{O}(\sigma_x)=\sqrt{2}$ is not
a peculiar property of the operator-based disturbance measure.

It will be instructive to compare the above scenario (1) that the system is prepared in the sate 
$|0\rangle$ and then a projective measurement of $\sigma_z$ is performed and another
scenario (2) that the system is prepared in the sate $|0\rangle$ but no measurement is performed.
In both scenarios, the system state $|0\rangle$ is unchanged and the probability distribution
of any observable does not change.  How can we operationally distinguish the two scenarios.   
In scenario 1 we have shown that the observable $B=\sigma_x$ is disturbed.
For the time $t=0$ just before the measurement and the time $t=\tau$ just after the measurement, 
we obtain $B(0)=\sigma_x\otimes I$ and $B(\tau)=\sigma_x\otimes \sigma_x$ (cf.~Eqs.~(\ref{eq:Xzero}) 
and (\ref{eq:Xtau})).
Their joint probability distribution $p_2(u,v)$ satisfies $p_2(u,v)=1/4$ for any $u,v$ (cf.~Eq.~(\ref{eq:DWOSC}))
that leads to $\eta(B)=\sqrt{2}$.
On the other hand, scenario 2 is easily analyzed, so that we obtain  $B(0)=B(\tau)=\sigma_x\otimes I$,
and their joint probability distribution $p_1(u,v)$ satisfies $p_1(u,v)=\delta_{u.v}/2$
that leads to $\eta(B)=0$.
The joint probability distributions can be experimentally obtained by weak measurements and post-selections
as proposed by Lund and Wiseman \cite{LW10}.
Thus, we can operationally distinguish between the above two scenarios.

This conclusion might sound counter-intuitive, as the pure sate has 
the ``maximal information'' about the system.
However, unchanging the pure state does not imply unchanging the observable, 
because the ``maximal information'' about the system does not include 
the ``maximal information'' about an observable, 
analogously with the fact that the ``maximal information'' about 
the whole system does not include the ``maximal information'' 
about subsystems.

In fact, according to the available classical description, the conditional probability 
\begin{align}{
\Pr\{\sigma_x(\tau)=u|\sigma_x(0)=v\}=\mu(u|v)=\frac{1}{2}
}\end{align}
shows that the value of $\sigma_x(0)$ has been completely randomized, 
although their marginals have not changed at all as
\begin{align}{
\Pr\{\sigma_x(\tau)=u\}=\Pr\{\sigma_x(0)=u\}=\frac{1}{2}.
}\end{align}
Thus, the DR neglects the disturbance 
caused by the randomization by measurement without changing the probability distribution.

\section{State-dependent formulation for non-disturbing measurements}
We have shown that the DR with the notion of 
probability  non-disturbing measurements contradicts the CP.
To reconcile the conflict,
we shall characterize non-disturbing measurements 
from the two fundamental
requirements: the CP\ and the operational accessibility.

Consider the following condition.

(S) $B(\tau)$ and $B(0)$ have their JPD $\mu$ in $|{\psi,\xi}\rangle$ satisfying that
$\mu(u,v)=0$ if $u\ne v$.

From the point of view of the CP, if condition (S) holds,
we should conclude that the measurement $\mathbf{M}$ does not disturb $B$ in $|{\psi}\rangle$.
Thus, condition (S) is considered as a sufficient condition for a proper definition of non-disturbing
measurements.

On the other hand, from the point of view of operational accessibility, it is convenient to consider 
the {\em weak joint distribution (WJD)} $\nu(u,v)$ of $B(\tau)$ and $B(0)$ in $|{\psi,\xi}\rangle$ 
defined by
\begin{align}{
\nu(u,v)=\langle{\psi,\xi|P^{B(\tau)}(u)P^{B(0)}(v)|\psi,\xi}\rangle.
}\end{align}
The WJD always exists, though possibly takes negative or complex values,
and is operationally accessible by weak measurement and post-selection 
\cite{AAV88,Ste95,Joz07};  see also Ref.~\cite{11UUP} for a short survey. 
Then it is natural to consider the following condition. 

(W) The WJD of $B(\tau)$ and $B(0)$ in $|{\psi,\xi}\rangle$ satisfies that $\nu(u,v)=0$ 
if $u\ne v$.

If the measurement $\mathbf{M}$ does not disturb the observable $B$
in $|{\psi}\rangle$, any operational tests for witnessing the disturbance should fail.
Since measuring WJD is one of such operational tests for which the disturbance is detected 
if $\nu(u,v)\ne 0$ for some $u\ne v$  \cite{GWPP04,MLMSGW07}, 
condition (W) is considered as a necessary condition for a proper definition 
of non-disturbing measurements.

Obviously, (W) is logically weaker than or equivalent to (S). 
However, Theorem \ref{th:prop-non-distubing} (Section \ref{se:PT:S=W})
shows that both conditions are actually equivalent. 
In fact, according to the theory of quantum perfect correlations \cite{05PCN,06QPC}, 
both conditions (S) and (W) equivalently require that $B(\tau)$ and $B(0)$ are perfectly
correlated in the state $|{\psi,\xi}\rangle$ \cite{11UUP}.
Thus, the above argument justifies the following definition of non-disturbing 
measurements.
We say that the measurement $\mathbf{M}$ {\em is properly non-disturbing} to an observable $B$ in 
$|{\psi}\rangle$ if one of the conditions (S) or (W) is satisfied.  
Since the WJD is operationally accessible, this definition is also operationally accessible. 

\section{Reliability of the operator-based disturbance measure}
To consider the reliability of the operator-based disturbance measure,
we examine the following requirements:
(i) the CP, (ii) soundness, 
(iii) operational accessibility,
 and (iv) completeness.

We have already shown that the operator-based disturbance measure $\eta_O$ satisfies
the CP, i.e.,  $\eta_O(B)=\delta_{G}(\mu)$ if  $B(\tau)$ and $B(0)$ 
have the JPD $\mu$.
We introduce the {\em soundness} requirement:
Any disturbance measure $\eta$ should assign the value 0 to any properly non-disturbing measurements.
It is interesting to see that the CP\ implies soundness.
To show this, suppose that the measurement is properly non-disturbing to $B$ in $|{\psi}\rangle$.  
Then $B(\tau)$ and $B(0)$ have the JPD $\mu$ satisfying that $\mu(u,v)=0$ if $u\ne v$.
It follows that $\varepsilon_{G}(\mu)=0$ and by the CP\ we have
$\eta(B)=\varepsilon_{G}(\mu)=0$.  Accordingly, the operator-based disturbance measure
$\eta_O$ satisfies the soundness requirement.
We conclude, therefore, that even if the measurement does not change the state, the disturbance 
can be operationally detected as long as the operator-based disturbance measure takes a positive value.

It has been known that the operator-based disturbance measure $\eta_O$ is 
operationally accessible in the two ways: (i) the tomographic three state method, proposed by 
Ozawa~\cite{04URN} and experimentally realized by Erhalt et al.~\cite{12EDU} and others 
\cite{13VHE,13EVR,16A3} and (ii) the weak measurement method, proposed 
by Lund and Wiseman \cite{LW10} and experimentally realized by Rozema et al.~\cite{RDMHSS12} and others 
\cite{WHPWP13,RBBFBW14,14A1}.

As the converse of soundness,
a disturbance measure $\eta$ is said to be {\em complete}
if $\eta$ assigns the value 0 only to {\em properly} non-disturbing measurements.
There is an example in which $\eta_O$ does not satisfy completeness (Ref.~\cite{06QPC}, p.~750).  
However, it is known that $\eta_O$ satisfies completeness if (i) (commutative case) 
$B(\tau)$ and $B(0)$ commute
in $|{\psi,\xi}\rangle$ or if (ii) (dichotomic case) $B^2=I$ (Ref.~\cite{19A1}, Theorem 3).  

We have seen that the operator-based disturbance measure satisfies 
all requirements (i)--(iii), and partially satisfies requirement (iv) above.

Analogously from an argument for the operator-based error measure 
$\varepsilon_O$ in Ref.~\cite{19A1}, it follows that $\eta_O$ 
can be modified to satisfy completeness
by defining the {\em operator-based locally uniform disturbance measure} $\overline{\eta}$ 
as
\begin{align}{
\overline{\eta}(B,\mathbf{M},|{\psi}\rangle)=\sup_{t\in\mathbb{R}}\eta_O(B,\mathbf{M},e^{-itB}|{\psi}\rangle).
}\end{align}
Then the error measure $\overline{\eta}$ satisfies requirements (i) -- (iv)
and also (v) (Dominating property) 
$\eta_O(B,\mathbf{M},|{\psi}\rangle)\le \overline{\eta}(B,\mathbf{M},|{\psi}\rangle)$ for any $|{\psi}\rangle$,
and (vi) (Conservation property for dichotomic measurements)
$\overline{\eta}(B)=\eta_O(B)$ if $B^2=I$.
Thus, all the EDRs for $\eta_{O}$ also holds for $\overline{\eta}$; see analogous discussions for
the operator-based error measure in Ref.~\cite{19A1}. 

In the following we shall discuss another requirement on locality,
which the operator-based disturbance measure satisfies, but contradicts the DR. 

\section{Locality of disturbance}
\sloppy
We have argued that state-dependent formulations of error-disturbance relations are
well-founded by the operator-based disturbance measure, which is a sound disturbance
measure according to the notion of properly non-disturbing measurement that is supported by
the CP and operational accessibility, in contrast to KJR's claim that  the operator-based 
disturbance measure is not sound under the notion of distributionally non-disturbing measurements,
which we have shown to contradict the CP.

Yet, there is a prevailing view that only probability distributions of outcomes of 
measurements can be operationally compared  \cite{BLW14RMP},
despite the fact that the new experimental techniques enable us to operationally detect 
the change of an observable in time:
(i) the tomographic three state method \cite{04URN,12EDU,13VHE,13EVR,16A3} and 
(ii) the weak measurement method \cite{LW10,RDMHSS12,WHPWP13,RBBFBW14,14A1}.

In what follows, we shall show below another drawback of the DR that the notion of probability 
non-disturbing measurements violates a locality requirement to be posed below.

Consider a composite system $\mathbf{S}_1+\mathbf{S}_2$ in a state $|{\Psi}\rangle$.
Since any local measurement of $\mathbf{S}_2$ does not interact with the system $\mathbf{S}_1$,
we naturally take it for granted that any local measurement of $\mathbf{S}_2$ non-disturbing to an observable $B_2$ 
in $\mathbf{S}_2$ should be non-disturbing to the observable $B_1\otimes B_2$ for any observable $B_1$ in $\mathbf{S}_1$.
We call this requirement the {\em locality requirement} for a definition of disturbing measurements.
We shall show that the definition of distributionally non-disturbing measurements 
does not satisfy this requirement,  whereas the definition of properly non-disturbing measurements does satisfy the requirement as shown in Theorem \ref{th:locality} (Section \ref{se:Preserving}), 

For this purpose, we consider a  maximally entangled two-qubit system 
$\mathbf{S}_1+\mathbf{S}_2$ in the Bell state $|{\Phi^{+}}\rangle=(|{00}\rangle+|{11}\rangle)/\sqrt{2}$.  
Since $|{\Phi^{+}}\rangle=(|{0_x0_x}\rangle+|{1_x1_x}\rangle)/\sqrt{2}$,
the outcomes of the  joint local measurements of the observables 
$\sigma_x^{(1)}=\sigma_x\otimes I$ and $\sigma_x^{(2)}=I\otimes \sigma_x$ show a perfect correlation.
From Theorem \ref{th:jpd-preserving} (Section \ref{se:Preserving}),
measurements properly non-disturbing to $\sigma_x^{(2)}$ does not change the JPD
of $\sigma_x^{(1)}$ and $\sigma_x^{(2)}$, so that the perfect correlation 
between $\sigma_x^{(1)}$ and $\sigma_x^{(2)}$ is not disturbed.
However, we shall show that a probability  non-disturbing measurement breaks
the perfect correlation, and this concludes that the definition of probability  
non-disturbing measurements does not satisfy the locality requirement, 
according to Theorem \ref{th:preserving} (Section \ref{se:Preserving}).

\begin{figure}[h]
\begin{center}
\includegraphics[width=0.8\textwidth]{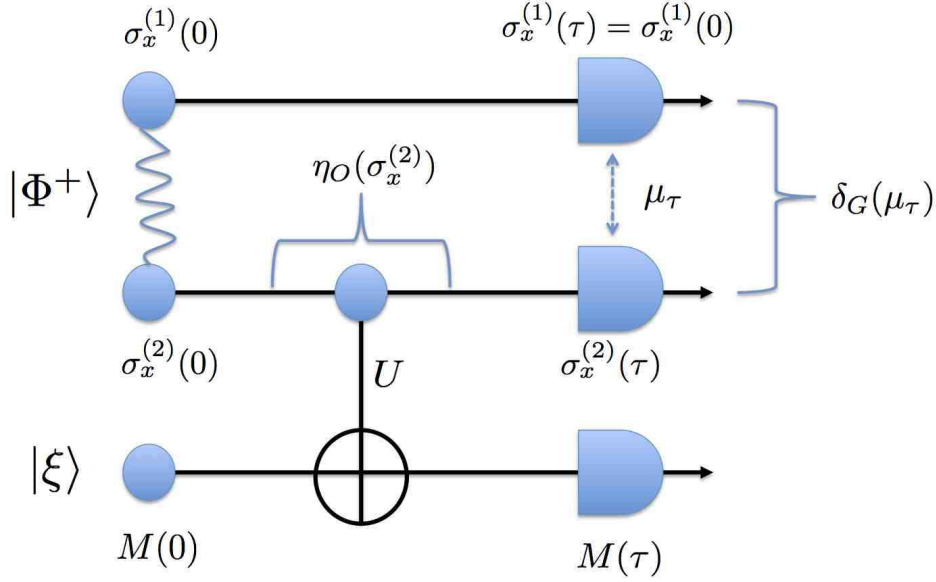}
\caption{{\bf  Definition of distributionally non-disturbing measurements violates
the locality requirement.}
(i)  The projective $\sigma_z^{(2)}$ measurement in $|{\Phi^{+}}\rangle$  is 
probability  non-disturbing to $\sigma_x^{(2)}$,  
but disturbs the JPD $\mu$ between $\sigma_x^{(1)}$ and $\sigma_x^{(2)}$ in $|{\Phi^{+}}\rangle$.
The perfect correlation between $\sigma_x^{(1)}$ and $\sigma_x^{(2)}$ at time 0, i.e.,  
$\delta_{G}(\mu_0)=0$,  is disturbed by the amount $\delta_{G}(\mu_{\tau})=\eta_O(\sigma_x^{(2)})=\sqrt{2}$.
(ii) The projective $\sigma_{\theta}^{(2)}$ measurement  in $|{\Phi^{+}}\rangle$ is
distributionally non-disturbing to $\sigma_x^{(2)}$, 
but disturbs the JPD $\mu$ between $\sigma_x^{(1)}$ and $\sigma_x^{(2)}$  in $|{\Phi^{+}}\rangle$
for $0\le\theta<\pi/2$.
The perfect correlation between $\sigma_x^{(1)}$ and $\sigma_x^{(2)}$, i.e., $\delta_{G}(\mu_0)=0$,  
is disturbed by the amount $\delta_{G}(\mu_{\tau})=\eta_O(\sigma_x^{(2)})=\sqrt{2}\cos\theta$.
(iii) An arbitrary local measurement of $\mathbf{S}_2$ in $|{\Phi^{+}}\rangle$ 
with the disturbance $\eta_O(\sigma_x^{(2)})$ disturbs the perfect correlation 
between $\sigma_x^{(1)}$ and $\sigma_x^{(2)}$,  i.e., $\delta_{G}(\mu_0)=0$,  
by the amount $\delta_G(\mu_{\tau})=\eta_O(\sigma_x^{(2)})$. 
This relation leads to a security tradeoff relation 
for the E91 quantum cryptography protocol \cite{Eke91}.
}
\label{fig}
\end{center}
\end{figure}

\subsection*{(i) Projective $\sigma_z^{(2)}$ measurement.} 

Suppose that the observer makes a 
projective $\sigma_z^{(2)}$ measurement 
just before the joint local  measurements of $\sigma_x^{(1)}$ and $\sigma_x^{(2)}$
(Figure \ref{fig} (i)).  
The measuring interaction is given by
\begin{align}{U=I\otimes |{0}\rangle\langle{0}|\otimes I +I\otimes|{1}\rangle\langle{1}|\otimes \sigma_x,} \end{align}
turned on from $t=0$ to $t=\tau$ between $\mathbf{S}_1+\mathbf{S}_2$ and 
the probe $\mathbf{P}=\mathbf{S}_3$ prepared in $|{\xi}\rangle=|{0}\rangle$
with the meter $M=\sigma_z^{(3)}$.
The time evolutions of relevant observables are given by
\begin{align}{
\sigma_x^{(1)}(0)&=\sigma_x\otimes I\otimes I,\\
\sigma_x^{(1)}(\tau)&=\sigma_x\otimes I\otimes I,\\
\sigma_x^{(2)}(0)&=I\otimes\sigma_x\otimes I,\\
\sigma_x^{(2)}(\tau)&=I\otimes \sigma_x\otimes\sigma_x.
}\end{align}

Then we shall see that the projective $\sigma_z^{(2)}$ measurement 
is distributionally non-disturbing to $\sigma_x^{(2)}$, but
it disturbs the perfect correlation 
between $\sigma_x^{(1)}$ and $\sigma_x^{(2)}$.
To show that, let $\mu_t$ be the JPD of $\sigma_x^{(1)}(t)$ and $\sigma_x^{(2)}(t)$ for $t=0,\tau$.
Then  we have 
\begin{align}{\label{eq:ent-1-1}
\mu_0(u,v)=\frac{1}{2}\delta_{u,v}, \quad  \mu_{\tau}(u,v)=\frac{1}{4}}\end{align}
for any $u,v=\pm 1$ (cf.~Section \ref{se:ent-1}).  
Since the marginal probability for $\sigma_x^{(2)}$ does not change, 
the projective $\sigma_z^{(2)}$ measurement is distributionally non-disturbing to $\sigma_x^{(2)}$. 
However, the perfect correlation between $\sigma_x^{(1)}$ and $\sigma_x^{(2)}$ at time $t=0$
has been disturbed.
The amount of the disturbance of the perfect correlation, i.e., $\delta_{G}(\mu_{0})=0$,  is measured by
the classical root-mean-square deviation $\delta_{G}(\mu_{\tau})$, and
we have 
\begin{align}{\label{eq:ent-1-2}
\delta_{G}(\mu_{\tau})=\eta_O(\sigma_x^{(2)})=\sqrt{2}}
\end{align}
(cf.~Section \ref{se:ent-1}).
Thus, the projective $\sigma_z^{(2)}$ measurement is distributionally disturbing to
$B_1\otimes \sigma_x^{(2)}$ for some observable $B_1$ of $\mathbf{S}_1$ by Theorem  
\ref{th:preserving} (Section \ref{se:Preserving}).
Therefore, we conclude that the definition of distributionally non-disturbing measurements
does not satisfy the locality requirement.

Since all observables $\sigma_x^{(1)}(0),\sigma_x^{(1)}(\tau),\sigma_x^{(2)}(0),\sigma_x^{(2)}(\tau)$ 
are mutually commuting,
we have their joint probability distribution.
The relation $\sigma_x^{(2)}(0)=\sigma_x^{(1)}(0)$ holds with probability one
by entanglement, and $\sigma_x^{(1)}(\tau)=\sigma_x^{(1)}(0)$ holds by locality
of the measurement. Thus, we have
\begin{align}{\Pr\{\sigma_x^{(1)}(\tau)=\sigma_x^{(2)}(0)=\sigma_x^{(1)}(0)\}=1.} \end{align}
From this and the relation $\mu_{\tau}(u,v)=1/4$ above,
 we obtain
\begin{align}{
\Pr\{\sigma_x^{(2)}(\tau)=v',\sigma_x^{(1)}(\tau)=u',\sigma_x^{(2)}(0)=v,\sigma_x^{(1)}(0)=u\}
=\frac{1}{4}\delta_{u,v}\delta_{u,u'}.
}\end{align}
Thus, 
we have the conditional probability 
\begin{align}{\Pr\{\sigma_x^{(2)}(\tau)=v'|\sigma_x^{(2)}(0)=v\}=\frac{1}{2}}\end{align}
showing that $\sigma_x^{(2)}$ is completely randomized by the measuring interaction,
whereas the DR neglects this randomization manifest in the joint probability
$\mu_{\tau}(u,v)$ of outcomes of local measurements of $\sigma_x^{(1)}(\tau)$ and 
$\sigma_x^{(2)}(\tau)$.

\subsection*{(ii) Projective $\sigma_{\theta}^{(2)}$ measurement.}
For quantitative considerations,  suppose that the observer makes a 
projective $\sigma_{\theta}^{(2)}$ measurement just before the joint 
local measurements of $\sigma_x^{(1)}$ and $\sigma_x^{(2)}$,
where $\sigma_{\theta}=\cos\theta\sigma_z+\sin\theta\sigma_x$ for $0\le \theta< \pi/2$
(Figure \ref{fig} (ii)).
Then the projective $\sigma_{\theta}$ measurement is distributionally non-disturbing to 
the observable $\sigma_x^{(2)}$
(cf.~Section \ref{se:ent-3}). 
However, they disturb the perfect correlation between $\sigma_x^{(1)}$
and $\sigma_x^{(2)}$.
In fact, the JPD $\mu_{\tau}$ of 
$\sigma_x^{(1)}(\tau)$ and $\sigma_x^{(2)}(\tau)$ is given by 
\begin{align}{\label{eq:entanglement-3}
\mu_{\tau}(u,v)
&=\frac{1}{4}\delta_{u,v}(1+\sin^2\theta)+\frac{1}{4}(1-\delta_{u,v})\cos^2\theta
}\end{align}
and the classical root-mean-square deviation $\delta_G(\mu_{\tau})$ and the disturbance $\eta_O(\sigma_x^{(2)})$ 
are given by
\begin{align}{\label{eq:entanglement-4}
\delta_G(\mu_\tau)=\eta_{O}(\sigma_x^{(2)})=\sqrt{2}\cos \theta
}\end{align}
 (cf.~Section \ref{se:ent-3}).  
 Thus, the joint probability distribution of the outcomes of
joint local measurements of $\sigma_x^{(1)}(\tau)$ and $\sigma_x^{(2)}(\tau)$ favors the non-zero value  
$\eta_{O}(\sigma_x^{(2)})=\sqrt{2}\cos \theta$, in contrast to the DR requiring
$\eta_{O}(\sigma_x^{(2)})=0$.
  
 \subsection*{(iii) Arbitrary local measurements.}
Suppose that the observer makes an arbitrary local measurement $\mathbf{M}$ of $\mathbf{S}_2$
from $t=0$ to $t=\tau$ with the probe prepared in $|{\xi}\rangle$
just before the joint local measurements of $\sigma_x^{(1)}$ and  $\sigma_x^{(2)}$ (Figure \ref{fig} (iii)).  
Then the JPD $\mu_0$ and 
the classical root-mean-square deviation $\delta_{G}(\mu_0)$ satisfy $\mu_0(u,v)=\delta_{u,v}/2$ 
and $\delta_G(\mu_0)=0$.
From Theorem \ref{th:chain-rule} (ii) (Section \ref{se:G=O}), the relation
\begin{align}{\label{eq:G=O}
\delta_G(\mu_{\tau})=\eta_{O}(\sigma_x^{(2)})
}\end{align}
holds for any local measurement $\mathbf{M}$ of $\mathbf{S}_2$.
Since $\eta_{O}(\sigma_x^{(2)})=0$ if and only if $\mathbf{M}$ is properly non-disturbing to $\sigma_x^{(2)}$
from Theorem \ref{th:chain-rule} (iii) (Section \ref{se:G=O}),
we conclude $\delta_G(\mu_{\tau})=0$ 
if and only if $\mathbf{M}$ is properly non-disturbing to $\sigma_x^{(2)}$.

Since Eq.~(\ref{eq:G=O}) holds for an arbitrary local measurement, it  
has an interesting application to quantum cryptography protocol E91 \cite{Eke91}.
Suppose that Alice and Bob share a maximally entangled pair $\mathbf{S}_1+\mathbf{S}_2$ in $|{\Phi^{+}}\rangle$ 
and that Eve measures $\mathbf{S}_2$ for eavesdropping the shared key.
Suppose that Alice and Bob share a key encoded in $\sigma_z^{(1)}$ and $\sigma_z^{(2)}$.
To estimate how much information leaks to Eve,  cooperative Alice and Bob measure
the error probability $P_{e}^{AB}$ defined by
 $P_{e}^{AB}=\delta_G(\mu)^2/4$.
Let $\varepsilon_O(\sigma_z^{(2)})$ be Eve's error for $\sigma_z^{(2)}$ measurement and let $\eta_O(\sigma_x^{(2)})$
be Eve's disturbance caused on $\sigma_x^{(2)}$.  Then  Eq.~(\ref{eq:G=O}) serves as a bridge between
$P_{e}^{AB}$ and the disturbance $\eta_O(\sigma_x^{(2)})$,
and the error-disturbance relation further relates $P_{e}^{AB}$ with Eve's 
error probability $P_{e}^{E}$ for eavesdropping on the key defined by
$P_{e}^{E}=\varepsilon_O(\sigma_z^{(2)})^2/4$, as follows.
Recall that the tight EDR
\begin{align}{
(\varepsilon_O(\sigma_z^{(2)})^2-2)^2+(\eta_O(\sigma_x^{(2)})^2-2)^2\le 4
}\end{align}
holds for $\varepsilon_O(\sigma_z^{(2)})$ and $\eta_O(\sigma_x^{(2)})$  (Ref.~\cite{14EDR}, Eq.~(28)).
Then this optimizes Eve's error probability $P_{e}^{E}$  as
\begin{align}{
P_{e} ^{E}({\rm optimal})= \frac{1}{2}\!-\!\sqrt{\frac{1}{4}\!-\!\left(P_{e}^{AB}\!-\!\frac{1}{2}\right)^2}.
}\end{align}
Thus, if the entanglement is not disturbed, i.e., $P_{e}^{AB}=0$, then Alice and Bob conclude
$P_{e}^{E}({\rm optimal})=1/2$ to ensure that no information has leaked to Eve.
On the other hand, if Eve makes the projective measurement of $\sigma_z^{(2)}$ with $\varepsilon_O(\sigma_z^{(2)})=0$, 
then she has the complete information $P_{e} ^{E}=P_{e}^{E}({\rm optimal})=0$ but this is detected by
Alice and Bob as $P_{e}^{AB}=1/2$ and $\eta_O(\sigma_x^{(2)})=\sqrt{2}$.  
However, the DR forces any disturbance measure $\eta$ to assign $\eta(\sigma_x^{(2)})=0$.
How does the DR work to analyze the security of quantum communication?

\section{Defense of state-dependent formulations}

In order to examine the reliability of the
operator-based disturbance measure,
KJR \cite{KJR14}\  introduced the following definition.
A state $|{\psi}\rangle$ is called a {\em zero-noise zero-disturbance (ZNZD) state} 
with respect to observables $A$ and $B$ 
if the projective measurement of $A$ in the state $|{\psi}\rangle$,
which always satisfies $\varepsilon(A)=0$,
is distributionally non-disturbing to $B$.
Then they proved that for every pair of non-commuting observables
$A$ and $B$, there exists a  ZNZD state
$|{\psi}\rangle$ such that $|\langle{\psi|[A,B]|\psi}\rangle|\not=0$.
Thus, if the disturbance measure $\eta$ satisfies the DR, 
any relation of the form
\begin{align}
\sum_{m,n=0}^{\infty}f_{mn}(A,B)\varepsilon(A,\rho)^m\eta(B,\eta)^n\ge
|{\langle {[A,B]} \rangle}|,
\end{align}
where $f_{00}(A,B)=0$, must be violated.
From this, KJR \cite{KJR14}\   concluded that any state-dependent EDR, based on
the expectation value of the commutator as a lower bound, is not tenable, and that
state-independent formulations are inevitable.

We have two objections to their claims.  First of all, the universally valid relation 
(\ref{eq:UVEDR}) with $\varepsilon_{O}(A)=0$
leads to the relation 
\begin{align}
\eta_{O}(B)\ge\frac{|{\langle {[A,B]} \rangle}|}{2\sigma(A)}>0
\end{align}
for any projective measurement of $A$ in any ZNZD state such that $|\langle{\psi|[A,B]|\psi}\rangle|\not=0$.
Thus, the measurement is properly disturbing to $B$ by the soundness of $\eta_O$,
and consequently the disturbance is operationally detectable by the operational accessibility
of the definition of properly non-disturbing measurements, 
so that the assumption by KJR \cite{KJR14}\  that $\eta(B)=0$ in any ZNZD state is unfounded. 

Secondly, they concluded that state-independent formulations are inevitable for alternative formulations.
However, currently proposed state-independent formulations of EDRs 
\cite{App98c,BLW14RMP,App16} do not appear to capture the essence of  
Heisenberg's original idea.  
Recall that Heisenberg derived his EDR by the $\gamma$-ray microscope thought experiment,
in which the EDR is derived from the relation between the resolution power and the Compton recoil,
reciprocally relating to the wave length of the incident light.
Since the wave length is independent of the state of the object,
the above formulation might be considered as state-independent.
However, the analysis is valid only state-dependently, since the resolution power of the 
microscope can be defined by the wave length only in the limited situation in which 
the object is properly placed in the scope of the microscope.  Thus, we can adequately 
define the error of the $\gamma$-ray microscope only state-dependently.  
In the state-independent formulations,
currently one defines the state-independent error as the worst case of the state-dependent error, 
which must diverge to infinity as the object wave function spreads out of, or moves far apart
from, the scope of the microscope.  
Such state-independent definitions would facilitate to reproduce the form of Heisenberg's 
original formulation,  but do not keep the physics underlying it.
Thus, state-dependent formulations are inevitable to represent Heisenberg's original 
idea underlying the uncertainty principle. 

\section{Discussion}
In this paper, we have given a definition of non-disturbing measurement from the point of view
of the correspondence principle and operational accessibility.
Subsequently, we have established the reliability of the operator-based disturbance measure.
We have already discussed the reliability of the operator-based error measure in our
previous work \cite{19A1}.
Both accounts ensure that universally valid EDRs \cite{03UVR,Bra13,Bra14,14EDR} 
reliably represent a dynamical aspect of Heisenberg's uncertainty principle besides
the well-established relation for the indeterminacy in quantum states representing 
a kinetic aspect of the principle.
Thus, the objections to state-dependent formulations of EDRs shown in \cite{BLW14RMP,KJR14}
are unfounded, although those views appear to still prevail in the literature 
\cite{Ren17,Mao19}.
We conclude that the theory \cite{03UVR,04URN,LW10,Bra13,Bra14,14EDR,19A1}
and experiments \cite{12EDU,RDMHSS12,13EVR,13VHE,14A1}
for state-dependent formulations of EDRs
are reliable and that state-dependent formulations
are inevitable to represent Heisenberg's original idea underlying the uncertainty principle. 

The new quantitative methods developed in this paper for universally valid EDRs
 with the well-defined operator-based disturbance measure incorporating with the
methods of weak values and weak measurements will provide new quantitative methods to
understand the change, transfer, or disturbance of observables in time, 
which does not manifest in the change of the probability distribution, but which does
manifest in the time-like correlation.
This quantity will be useful and even inevitable for exploring foundational problems 
in quantum physics  
including the long-lasting controversy over the roles of uncertainty principle 
in which-way measurements for interferometers 
(Refs.~\cite{SEW91,STCW94,GWPP04,MLMSGW07,Xia19} and the references therein).
In addition to the foundational problems, it will be expected that universally valid EDRs 
call for new research interests in exploring various frontiers in physics 
including fault-tolerant quantum computing \cite{NC00,02CQC,TT20}, 
quantum metrology \cite{88MS,GLM04,Asp14}, and multi-messenger astronomy \cite{Bar17}, 
in which technological limits would be overcome by the fundamental principle 
independent of particular models.
We hope that the methods of operator-based disturbance measures as well as 
operator-based error measures will be accepted for broad areas of quantum physics. 

\section{Projective $\sigma_{z}$ measurement}\label{se:DWOSC}
Here, we shall give a derivation of Eq.~(\ref{eq:DWOSC}).
The JPD $\mu$ of $\sigma_x(\tau)=\sigma_x\otimes\sigma_x$ and $\sigma_x(0)=\sigma_x\otimes I$ in the state 
$|{0,0}\rangle$ is given by
\begin{align*}
\mu(u,v)
&=\langle{0,0|P^{\sigma_x\otimes \sigma_x}(u)P^{\sigma_x\otimes I}(v)|0,0}\rangle.
\end{align*}
We have
\begin{align*}
P^{\sigma_x\otimes\sigma_x}(+1)P^{\sigma_x\otimes I}(\pm 1)
&=
[P^{\sigma_x}(+1)\otimes P^{\sigma_x}(+1)+ P^{\sigma_x}(-1)\otimes P^{\sigma_x}(-1)]
(P^{\sigma_x}(\pm 1)\otimes I) \\
&=
P^{\sigma_x}(\pm1)\otimes P^{\sigma_x}(\pm1),\\
P^{\sigma_x\otimes\sigma_x}(-1)P^{\sigma_x\otimes I}(\pm 1)
&=
[P^{\sigma_x}(+1)\otimes P^{\sigma_x}(-1)+P^{\sigma_x}(-1)\otimes P^{\sigma_x}(+1)]
(P^{\sigma_x}(\pm 1)\otimes I)\\
&=
P^{\sigma_x}(\pm1)\otimes P^{\sigma_x}(\mp1).
\end{align*}
Consequently,
\begin{align*}
\mu(+1,\pm 1)
&=\langle{0,0|P^{\sigma_x}(\pm 1)\otimes P^{\sigma_x}(\pm1)|0,0}\rangle
=\langle{0|P^{\sigma_x}(\pm 1)|0}\langle{0|P^{\sigma_x}(\pm1)|0}\rangle\\
&=\frac{1}{4},\\
\mu(-1,\pm 1)
&=\langle{0,0|P^{\sigma_x}(\pm1)\otimes P^{\sigma_x}(\mp 1)|0,0}\rangle
=\langle{0|P^{\sigma_x}(\pm 1)|0}\langle{0|P^{\sigma_x}(\mp 1)|0}\rangle\\
&=\frac{1}{4}.
\end{align*}
Therefore, we obtain Eq.~(\ref{eq:DWOSC}), i.e., 
\[
\mu(u,v)=\frac{1}{4}.
\]

\section{Equivalence for properly non-disturbing measurements}
\label{se:PT:S=W}

\begin{Theorem}\label{th:prop-non-distubing}
Let $\mathbf{M}$ be a measurement of a system $\mathbf{S}$ in a state $|{\psi}\rangle$
carried out by a measuring interaction with a probe $\mathbf{P}$ 
prepared in a fixed state $|{\xi}\rangle$ from $t=0$ to $t=\tau$.  Then for any observable $B$ in $\mathbf{S}$,
the following conditions are equivalent.

(i) Condition (W): The WJD $\nu$ of $B(\tau)$ and $B(0)$ in $|{\psi,\xi}\rangle$ satisfies that $\nu(u,v)=0$ 
if $u\ne v$.

(ii) The relation
\begin{align*}{
P^{B(\tau)}(u)|{\psi,\xi}\rangle=P^{B(0)}(u)|{\psi,\xi}\rangle
}\end{align*}
holds for any $u$.

(iii) Condition (S): $B(\tau)$ and $B(0)$ have their JPD $\mu$ in $|{\psi,\xi}\rangle$ satisfying that
$\mu(u,v)=0$ if $u\ne v$.
\end{Theorem}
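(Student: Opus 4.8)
The plan is to prove the three conditions mutually equivalent by establishing the cycle (i)~$\Rightarrow$~(ii)~$\Rightarrow$~(iii)~$\Rightarrow$~(i). Throughout I would abbreviate $|\Psi\rangle=|\psi,\xi\rangle$, $P_u=P^{B(\tau)}(u)$, and $Q_v=P^{B(0)}(v)$, so that each family is a complete orthogonal resolution of the identity: $\sum_u P_u=\sum_v Q_v=I$, $P_uP_{u'}=\delta_{u,u'}P_u$, and likewise for $Q$, with $\nu(u,v)=\langle P_u\Psi|Q_v\Psi\rangle$ by self-adjointness of the projections. The implication (iii)~$\Rightarrow$~(i) is immediate and is just the already-noted fact that (W) is logically weaker than (S): when the JPD $\mu$ exists it coincides with the weak distribution, $\mu(u,v)=\langle\Psi|P_uQ_v|\Psi\rangle=\nu(u,v)$, so a diagonal $\mu$ forces a diagonal $\nu$.

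For (i)~$\Rightarrow$~(ii), which I regard as the crux, I would first use completeness of the two families to extract the diagonal correlations from condition (W). Summing $\nu(u,v)=\langle P_u\Psi|Q_v\Psi\rangle$ over $v$ gives $\langle P_u\Psi|Q_u\Psi\rangle=\langle P_u\Psi|\Psi\rangle=\|P_u\Psi\|^2$, since only the $v=u$ term survives by (W); summing instead over $u$ gives $\langle P_v\Psi|Q_v\Psi\rangle=\|Q_v\Psi\|^2$. Hence for every $u$ the number $\langle P_u\Psi|Q_u\Psi\rangle$ is real and equals both $\|P_u\Psi\|^2$ and $\|Q_u\Psi\|^2$. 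Substituting into
\[
\|P_u\Psi-Q_u\Psi\|^2=\|P_u\Psi\|^2+\|Q_u\Psi\|^2-2\,\mathrm{Re}\,\langle P_u\Psi|Q_u\Psi\rangle
\]
collapses the right-hand side to $0$, yielding $P_u|\Psi\rangle=Q_u|\Psi\rangle$, which is exactly (ii).

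For (ii)~$\Rightarrow$~(iii), I would use (ii) to verify that $B(\tau)$ and $B(0)$ commute in $|\Psi\rangle$ in the sense of the excerpt, and then invoke the JPD-existence criterion (Ref.~\cite{19A1}, Theorem~1). Concretely, (ii) allows replacing $Q_v|\Psi\rangle$ by $P_v|\Psi\rangle$ and $P_u|\Psi\rangle$ by $Q_u|\Psi\rangle$ at will, so both $P_uQ_v|\Psi\rangle$ and $Q_vP_u|\Psi\rangle$ reduce to $\delta_{u,v}P_u|\Psi\rangle$; the commutation relation $P_uQ_v|\Psi\rangle=Q_vP_u|\Psi\rangle$ then holds and the JPD exists. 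The same reduction gives $\mu(u,v)=\langle\Psi|P_uQ_v|\Psi\rangle=\delta_{u,v}\|P_u\Psi\|^2$, which is diagonal, establishing (S).

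The only genuinely delicate step is (i)~$\Rightarrow$~(ii): the temptation is to manipulate the (possibly complex) off-diagonal weak values directly, whereas the efficient route is to exploit completeness of the two projection families to pin down the diagonal inner products, after which the elementary norm identity does all the work. The remaining implications are essentially bookkeeping, modulo the cited characterization of when a joint probability distribution exists.
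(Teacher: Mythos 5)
Your proof is correct and follows essentially the same route as the paper's: the cycle (i)$\Rightarrow$(ii)$\Rightarrow$(iii)$\Rightarrow$(i), with (i)$\Rightarrow$(ii) obtained by summing the weak joint distribution over each index to pin down the diagonal terms and then collapsing the norm identity $\|P_u\Psi-Q_u\Psi\|^2=0$, and (ii)$\Rightarrow$(iii) obtained by substituting $P_u|\Psi\rangle=Q_u|\Psi\rangle$ to verify commutation in the state and diagonality of the resulting JPD. No gaps; the argument matches the paper's in both structure and detail.
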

\begin{proof} 
The assertion was generally proved in Refs.~\cite{05PCN,06QPC}
after a lengthy argument.  
We give a direct proof for the present context.

(i)$\Rightarrow$(ii):
Suppose (i) holds.
Then the WJD $\nu(u,v)$ of $B(\tau)$ and $B(0)$ in $|{\psi,\xi}\rangle$
satisfies $\nu(u,v)=0$ if $u\ne v$.
It follows that  $\nu(u,u)=\sum_{v}\nu(u,v)$.  Thus,
\begin{eqnarray*}
\langle{\psi,\xi|P^{B(\tau)}(u)P^{B(0)}(u)|\psi,\xi}\rangle
&=&\langle{\psi,\xi|P^{B(0)}(u)|\psi,\xi}\rangle,\\
\langle{\psi,\xi|P^{B(0)}(u)P^{B(\tau)}(u)|\psi,\xi}\rangle
&=&\langle{\psi,\xi|P^{B(\tau)}(u)|\psi,\xi}\rangle.
\end{eqnarray*}
Consequently,
\begin{align*}{
\|P^{B(\tau)}(u)|{\psi,\xi}\rangle-P^{B(0)}(u)|{\psi,\xi}\rangle\|^2=0,
}\end{align*}
and 
\begin{eqnarray*}
P^{B(\tau)}(u)|{\psi,\xi}\rangle=P^{B(0)}(u)|{\psi,\xi}\rangle.
\end{eqnarray*}
Thus, condition (ii) holds and the implication (i)$\Rightarrow$(ii) follows.

(ii)$\Rightarrow$(iii):
Suppose (ii) holds.
Then
\begin{eqnarray*}
P^{B(0)}(u)P^{B(\tau)}(v)|{\psi,\xi}\rangle&=&\delta_{u,v}P^{B(0)}(u)|{\psi,\xi}\rangle,\\
P^{B(\tau)}(v)P^{B(0)}(u)|{\psi,\xi}\rangle&=&\delta_{u,v}P^{B(0)}(u)|{\psi,\xi}\rangle.
\end{eqnarray*}
Consequently,
\[
P^{B(0)}(u)P^{B(\tau)}(v)|{\psi,\xi}\rangle=P^{B(\tau)}(v)P^{B(0)}(u)|{\psi,\xi}\rangle.
\]
It follows that $B(0)$ and $B(\tau)$ commute in $|{\psi,\xi}\rangle$ and condition 
(S) holds.
Thus the implication (ii)$\Rightarrow$(iii) follows.

Since the implication (iii)$\Rightarrow$(i) holds obviously, all conditions (i) -- (iii) are equivalent.
\end{proof}

\section{Locality of properly non-disturbing measurements}
\label{se:Preserving}

\begin{Theorem}\label{th:locality}
The definition of properly disturbing measurements satisfies the locality requirement.
\end{Theorem}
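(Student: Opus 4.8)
The plan is to reduce the locality requirement to the algebraic characterization of proper non-disturbance supplied by Theorem~\ref{th:prop-non-distubing} (condition (ii)), and then to exploit that a local measurement of $\mathbf{S}_2$ leaves the $\mathbf{S}_1$-factor untouched. First I would fix notation: since the framework defines proper non-disturbance for pure system states, I regard $\mathbf{M}$ as a measurement of the composite $\mathbf{S}_1+\mathbf{S}_2$ in the pure state $|\Psi\rangle$, with $B_2$ read as $I_1\otimes B_2$. Locality of $\mathbf{M}$ means its unitary has the form $\widetilde U=I_1\otimes U$ on $\mathcal H_1\otimes\mathcal H_2\otimes\mathcal H_{\mathbf P}$, with $U$ acting on $\mathbf{S}_2+\mathbf{P}$. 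Writing $Y=B_2\otimes I_{\mathbf P}$ and $X=U^{\dagger}Y U$ as observables on $\mathcal H_2\otimes\mathcal H_{\mathbf P}$, the Heisenberg observables become $B_2(0)=I_1\otimes Y$ and $B_2(\tau)=I_1\otimes X$, while for any observable $B_1$ on $\mathbf{S}_1$ the target $C=B_1\otimes B_2$ satisfies $C(0)=B_1\otimes Y$ and, since $\widetilde U$ commutes with $B_1\otimes I$, $C(\tau)=B_1\otimes X$. The same operator $X$ thus controls both the disturbance of $B_2$ and that of $B_1\otimes B_2$; this factorization is the structural fact that makes locality work.

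By Theorem~\ref{th:prop-non-distubing}, the hypothesis that $\mathbf{M}$ is properly non-disturbing to $B_2$ is equivalent to condition (ii), which here reads
\begin{align*}
(I_1\otimes P^{X}(x))\,|\Psi,\xi\rangle=(I_1\otimes P^{Y}(x))\,|\Psi,\xi\rangle
\end{align*}
for every eigenvalue $x$, where $X=U^\dagger Y U$ shares the spectrum of $Y$ and $P^{X}(x)=U^{\dagger}P^{Y}(x)U$. The key step is to apply $P^{B_1}(b_1)\otimes I$ to this identity: because $P^{B_1}(b_1)\otimes I$ commutes with $I_1\otimes P^{X}(x)$ and with $I_1\otimes P^{Y}(x)$, I obtain the per-pair identity
\begin{align*}
(P^{B_1}(b_1)\otimes P^{X}(x))\,|\Psi,\xi\rangle=(P^{B_1}(b_1)\otimes P^{Y}(x))\,|\Psi,\xi\rangle
\end{align*}
for every pair $(b_1,x)$. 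Summing over the level set $\{(b_1,x):b_1x=w\}$, and using $P^{C(\tau)}(w)=\sum_{b_1x=w}P^{B_1}(b_1)\otimes P^{X}(x)$ together with $P^{C(0)}(w)=\sum_{b_1x=w}P^{B_1}(b_1)\otimes P^{Y}(x)$, yields $P^{C(\tau)}(w)|\Psi,\xi\rangle=P^{C(0)}(w)|\Psi,\xi\rangle$ for all $w$. This is precisely condition (ii) for $C=B_1\otimes B_2$, so Theorem~\ref{th:prop-non-distubing} again lets me conclude that $\mathbf{M}$ is properly non-disturbing to $B_1\otimes B_2$, establishing the locality requirement.

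The main obstacle is bookkeeping rather than analysis: the eigenvalues of $C=B_1\otimes B_2$ are the products $b_1x$, and distinct pairs (in particular all pairs with $b_1=0$, or with $x=0$) can collapse onto the same $w$, so the level-set decomposition of $P^{C(\tau)}$ and $P^{C(0)}$ must be handled with care. The point I would stress is that proving the identity per pair $(b_1,x)$ before grouping into level sets makes this collapse harmless, since any regrouping of terms that are individually equal preserves the equality. I would also note that the argument uses no special feature of $|\Psi\rangle$ or of $B_1$, so it holds for an arbitrary composite state and an arbitrary $\mathbf{S}_1$-observable, which is exactly the generality the locality requirement demands; the entangled Bell-state scenarios considered earlier are then merely instances of this statement.
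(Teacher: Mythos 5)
Your proposal is correct and follows essentially the same route as the paper's proof: both reduce proper non-disturbance to condition (ii) of Theorem~\ref{th:prop-non-distubing}, use locality to identify $B_1(\tau)$ with $B_1(0)$ so that the $\mathbf{S}_1$-projections commute through the identity, and then reassemble the spectral projections of $B_1\otimes B_2$ from products of spectral projections of the factors. The only cosmetic difference is that the paper passes through arbitrary polynomials $f,g,h$ before specializing to projections, whereas you work with projections and level sets directly (and your per-pair-then-regroup remark correctly handles the eigenvalue collapse that the paper's polynomial formulation absorbs implicitly).
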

\begin{proof}
Let $\mathbf{M}$ be a local measurement of $\mathbf{S}_2$ in a composite system $\mathbf{S}_1+\mathbf{S}_2$ 
in a state $|{\Psi}\rangle$.
Without any loss of generality that $\mathbf{M}$ is carried out by a measuring interaction 
$U$ with a probe $\mathbf{P}$ prepared in a state $|{\xi}\rangle$ from time $t=0$ to $t=\tau$.  
Suppose that $\mathbf{M}$ is properly non-disturbing to an observable $B_2$ in $\mathbf{S}_2$.
Let $g(v)$ be a polynomial in $v$. 
From Theorem \ref{th:prop-non-distubing} (ii) (Section \ref{se:PT:S=W}) we have
\begin{align*}{
g(B_2(\tau))|{\Psi,\xi}\rangle=g(B(0))|{\Psi,\xi}\rangle.
}\end{align*}
Let $B_1$ be an observable in $\mathbf{S}_1$.  Let $f(u)$ be a polynomial in $u$. 
Since $B_1(0)=B_1(\tau)$ by the locality of $\mathbf{M}$ we have 
\begin{align*}{
f(B_1(\tau))g(B_2(\tau))|{\Psi,\xi}\rangle=f(B_1(0))g(B_2(0))|{\Psi,\xi}\rangle.
}\end{align*}
It follows from linearity that
\begin{align*}{
h(B_1(\tau),B_2(\tau))|{\Psi,\xi}\rangle=h(B_1(0),B_2(0))|{\Psi,\xi}\rangle
}\end{align*}
for any polynomial $h(u,v)$ in $(u,v)$, and in particular we have
\begin{align*}{
P^{B_1(\tau)B_2(\tau)}(v)|{\Psi,\xi}\rangle=P^{B_1(0)B_2(0)}(v)|{\Psi,\xi}\rangle.
}\end{align*}
Thus, $\mathbf{M}$ does not disturb $B_1\otimes B_2$ for any $B_1$ in $\mathbf{S}_1$
by Theorem \ref{th:prop-non-distubing} (ii) (Section \ref{se:PT:S=W}).
Therefore, the definition of properly disturbing measurements satisfies the locality requirement.
\end{proof}

\begin{Theorem}\label{th:preserving}
Let $\mathbf{S}_1+\mathbf{S}_2$ be a composite system in a state $|{\Psi}\rangle$.
Let $B_1$ be an observable in $\mathbf{S}_2$.
Any local measurement $\mathbf{M}$ of $\mathbf{S}_2$ distributionally non-disturbing to $B_1\otimes B_2$ 
for any $B_1$ in $\mathbf{S}_1$ does not change the JPD of observable $B_1$ and $B_2$ for
any observable $B_1$ in $\mathbf{S}_1$.
\end{Theorem}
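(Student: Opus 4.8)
The plan is to reduce the statement about joint distributions to a family of single-observable distribution equalities and then recover the full JPD from its moments. Write $B_1(0)=B_1\otimes I\otimes I$, $B_2(0)=I\otimes B_2\otimes I$, and $B_i(\tau)=U^{\dagger}B_i(0)U$, where $U$ is the local measuring interaction acting on $\mathbf{S}_2+\mathbf{P}$ only. Because $\mathbf{M}$ is local to $\mathbf{S}_2$, conjugation by $U$ fixes every observable of $\mathbf{S}_1$, so $B_1(\tau)=B_1(0)$ and, more generally, $P^{B_1}(a)(\tau)=P^{B_1}(a)(0)$ for each spectral projection, exactly as in the proof of Theorem~\ref{th:locality}. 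Let $\mu_t$ be the JPD of $B_1(t)$ and $B_2(t)$ in $|{\Psi,\xi}\rangle$; both $\mu_0$ and $\mu_\tau$ exist since in each case the two observables act on complementary tensor factors and hence commute, and $\mu_t(a,b)=\langle{\Psi,\xi|P^{B_1(t)}(a)P^{B_2(t)}(b)|\Psi,\xi}\rangle$. The goal is to show $\mu_0=\mu_\tau$.

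The key device is to exploit the hypothesis for \emph{all} choices of the $\mathbf{S}_1$-observable, in particular for the spectral projections of $B_1$ themselves. Fixing an eigenvalue $a$ of $B_1$, I would feed $P^{B_1}(a)$, which is itself an observable of $\mathbf{S}_1$, into the hypothesis and apply it to $P^{B_1}(a)\otimes B_2$: distributional non-disturbance means the distributions of $(P^{B_1}(a)\otimes B_2)(0)$ and $(P^{B_1}(a)\otimes B_2)(\tau)$ coincide, so all their moments agree. Using that $(P^{B_1}(a)\otimes B_2)(t)=P^{B_1}(a)(t)\,B_2(t)$ with the two factors commuting and $P^{B_1}(a)(t)$ idempotent, the $n$-th power collapses to $P^{B_1}(a)(t)\,B_2(t)^n$ for $n\ge 1$, whence $\langle{\Psi,\xi|P^{B_1}(a)(t)\,B_2(t)^n|\Psi,\xi}\rangle=\sum_b b^n\,\mu_t(a,b)$. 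Thus the finite nonnegative measures $b\mapsto\mu_0(a,b)$ and $b\mapsto\mu_\tau(a,b)$ share every moment of order $n\ge 1$.

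To close the argument I would supply the zeroth moment separately, since the $n=0$ moment of the product observable is the trivial identity $1=1$ and does not by itself pin down the total mass. Summing over $b$ and using completeness $\sum_b P^{B_2(t)}(b)=I$ together with $P^{B_1}(a)(\tau)=P^{B_1}(a)(0)$ gives $\sum_b\mu_0(a,b)=\langle{\Psi,\xi|P^{B_1}(a)(0)|\Psi,\xi}\rangle=\sum_b\mu_\tau(a,b)$, i.e.\ the locality-preserved marginal of $B_1$. Hence the two measures have identical moments of every order $n\ge 0$, and since $B_2$ has bounded (finite) spectrum, a measure is determined by its moments, so $\mu_0(a,b)=\mu_\tau(a,b)$ for all $b$. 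As $a$ ranges over the spectrum of $B_1$ and $B_1$ over all observables of $\mathbf{S}_1$, this yields $\mu_0=\mu_\tau$ for every $B_1$.

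I expect the main obstacle to be the clean decoupling of the product observable: one must notice that the distribution of $P^{B_1}(a)\otimes B_2$ does \emph{not} directly encode the slice $\mu_t(a,\cdot)$, because its zero-eigenvalue bucket conflates the contribution from the complementary projection $I-P^{B_1}(a)$ with any $b=0$ term. The resolution is precisely to use moments of order $n\ge 1$ (which annihilate the zero eigenvalue) for the hypothesis-driven part and to recover the total mass from the locality-preserved marginal rather than from the hypothesis. Once this bookkeeping is in place, the remainder is the standard fact that a compactly supported measure is determined by its moments.
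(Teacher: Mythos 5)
Your proposal is correct and follows essentially the same route as the paper's proof: apply the hypothesis to $P^{B_1}(u)\otimes B_2$, use idempotence to reduce the $n$-th moment to $\langle P^{B_1(t)}(u)B_2(t)^n\rangle$, and recover the spectral projections of $B_2$ from the moments (the paper does this via polynomial interpolation, you via moment determinacy of finitely supported measures — the same fact). Your explicit handling of the $n=0$ total-mass term via the locality-preserved marginal is a point the paper's write-up leaves implicit, and it is handled correctly.
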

\begin{proof}
Let $B_1$ be an observable in $\mathbf{S}_1$ and 
$A_1=P^{B_1}(u)$.
By assumption, $\mathbf{M}$ does not change the probability distribution of 
$A_1\otimes B_2=P^{B_1}(u)\otimes B_2$, so that all moments of  
$A_1\otimes B_2$ are unchanged as 
\begin{align*}{
\langle{\Psi,\xi|P^{B_1(\tau)}(u) B_2(\tau)^n|\Psi,\xi}\rangle
=\langle{\Psi,\xi|P^{B_1(0)}(u) B_2(0)^n|\Psi,\xi}\rangle
} \end{align*}
for all $n$.
By linearity, we have
\begin{align*}{
\langle{\Psi,\xi|P^{B_1(0)}(u)f(B_2(0))|\Psi,\xi}\rangle
=\langle{\Psi,\xi|P^{B_1(\tau)}(u)f(B_2(\tau))|\Psi,\xi}\rangle
} \end{align*}
for any polynomial $f(w)$ in $w$.  It follows that 
\begin{align*}{
\langle{\Psi,\xi|P^{B_1(\tau)}(u)P^{B_2(\tau)}(v)|\Psi,\xi}\rangle
=\langle{\Psi,\xi|P^{B_1(0)}(u)P^{B_2(0)}(v)|\Psi,\xi}\rangle,
}\end{align*}
and hence $\mathbf{M}$ does not change the JPD of $B_1$ and $B_2$ for
any observable $B_1$ in $\mathbf{S}_1$.
\end{proof}

\begin{Theorem}\label{th:jpd-preserving}
Let $\mathbf{S}_1+\mathbf{S}_2$ be a composite system in a state $|{\Psi}\rangle$.
Any local measurement $\mathbf{M}$ of $\mathbf{S}_2$ properly non-disturbing to $B_2$ in $\mathbf{S}_2$ 
does not change the JPD of observable $B_1$ and $B_2$ for
any observable $B_1$ in $\mathbf{S}_1$.
\end{Theorem}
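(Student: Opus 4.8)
The plan is to reduce the claim to the two ingredients already in hand: the locality of $\mathbf{M}$, which freezes every observable of $\mathbf{S}_1$ in time, and the characterization of properly non-disturbing measurements in Theorem~\ref{th:prop-non-distubing}(ii), which pins down how the $B_2$-projections act on the initial vector. Throughout I read ``properly non-disturbing to $B_2$'' as the conclusion of Theorem~\ref{th:prop-non-distubing} applied to the composite system $\mathbf{S}_1+\mathbf{S}_2$ with the observable $B_2$ (i.e.\ its natural extension $I\otimes B_2$) and the initial vector $|{\Psi,\xi}\rangle$, exactly as this theorem is invoked in the proof of Theorem~\ref{th:locality}.

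First I would record that the relevant JPDs exist. Since $\mathbf{M}$ is a local measurement of $\mathbf{S}_2$, its unitary acts only on $\mathbf{S}_2+\mathbf{P}$, so $B_1(\tau)=B_1(0)$ for any observable $B_1$ in $\mathbf{S}_1$, while $B_2(\tau)=U^{\dagger}B_2(0)U$ lives in $\mathbf{S}_2+\mathbf{P}$. Hence $B_1(t)$ and $B_2(t)$ act on complementary tensor factors and commute as operators for $t=0,\tau$; in particular they commute in $|{\Psi,\xi}\rangle$, so, commuting observables having a JPD in any state, the distribution $\mu_t(u,v)=\langle{\Psi,\xi|P^{B_1(t)}(u)P^{B_2(t)}(v)|\Psi,\xi}\rangle$ is well defined for $t=0,\tau$. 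The goal is then the identity $\mu_\tau=\mu_0$.

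Next I would carry out the substitution. Locality gives $P^{B_1(\tau)}(u)=P^{B_1(0)}(u)$, and the properly non-disturbing hypothesis together with Theorem~\ref{th:prop-non-distubing}(ii) gives $P^{B_2(\tau)}(v)|{\Psi,\xi}\rangle=P^{B_2(0)}(v)|{\Psi,\xi}\rangle$. Inserting both into the definition of $\mu_\tau$ yields
\begin{align*}
\mu_\tau(u,v)&=\langle{\Psi,\xi|P^{B_1(\tau)}(u)P^{B_2(\tau)}(v)|\Psi,\xi}\rangle\\
&=\langle{\Psi,\xi|P^{B_1(0)}(u)P^{B_2(0)}(v)|\Psi,\xi}\rangle=\mu_0(u,v),
\end{align*}
where the middle step applies $P^{B_2(\tau)}(v)$ to the ket and replaces $P^{B_1(\tau)}(u)$ by $P^{B_1(0)}(u)$. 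As this holds for every $u,v$, the JPD is unchanged, which is the assertion.

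The argument is short because the heavy lifting is already done by Theorem~\ref{th:prop-non-distubing}; the only points demanding care are the interpretation of ``properly non-disturbing to $B_2$'' in the entangled composite state, and the verification that locality forces $B_1(t)$ and $B_2(t)$ to commute so that the projection products genuinely represent the JPD. If one prefers not to invoke the characterization directly, an equivalent route is to reuse the vector identity $f(B_1(\tau))g(B_2(\tau))|{\Psi,\xi}\rangle=f(B_1(0))g(B_2(0))|{\Psi,\xi}\rangle$ established inside the proof of Theorem~\ref{th:locality}, promote it from polynomials to spectral projections by functional calculus, and then take the inner product with $\langle{\Psi,\xi}|$; this gives the same conclusion.
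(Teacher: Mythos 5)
Your proof is correct, but it takes a more direct route than the paper. The paper's own proof is a two-step chain: it first invokes Theorem~\ref{th:locality} to promote ``properly non-disturbing to $B_2$'' to ``properly non-disturbing to $B_1\otimes B_2$,'' weakens this to distributional non-disturbance of $B_1\otimes B_2$, and then applies Theorem~\ref{th:preserving}, which recovers the JPD from the unchanged moments of $P^{B_1}(u)\otimes B_2$. You instead bypass both intermediate theorems: you feed the vector identity $P^{B_2(\tau)}(v)|{\Psi,\xi}\rangle=P^{B_2(0)}(v)|{\Psi,\xi}\rangle$ from Theorem~\ref{th:prop-non-distubing}(ii) and the operator identity $P^{B_1(\tau)}(u)=P^{B_1(0)}(u)$ from locality straight into the defining expression for $\mu_\tau$, after correctly checking that $B_1(t)$ and $B_2(t)$ act on complementary tensor factors so that the product-of-projections formula really is the JPD. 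Your argument is shorter and uses a strictly stronger hypothesis at the substitution step (the vector-level identity rather than mere equality of distributions of $P^{B_1}(u)\otimes B_2$); the paper's longer detour has the virtue of exercising Theorem~\ref{th:preserving}, which is needed independently in the main text to argue that breaking the perfect correlation forces distributional disturbance of some $B_1\otimes B_2$. Your closing remark that one could instead reuse the polynomial identity from the proof of Theorem~\ref{th:locality} and pass to spectral projections is also sound and essentially reconstructs the paper's first step. No gap.
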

\begin{proof}
Any local measurement $\mathbf{M}$ of $\mathbf{S}_2$ properly non-disturbing to $B_2$ in $\mathbf{S}_2$
is properly non-disturbing to $B_1\otimes B_2$ for any $B_1$ in $\mathbf{S}_1$
by Theorem \ref{th:locality} (Section \ref{se:Preserving}),
and hence it is distributionally non-disturbing to $B_1\otimes B_2$ for any $B_1$ in $\mathbf{S}_1$.
Consequently, the assertion follows from Theorem \ref{th:preserving}. 
\end{proof}

\section{Operator-based disturbance measure and disturbance of entanglement}
\label{se:G=O}
\begin{Theorem}\label{th:chain-rule}
Let $\mathbf{S}_1+\mathbf{S}_2$ be a composite system in a state $|{\Psi}\rangle$.
Let $\mathbf{M}$ be a local measurement of the system $\mathbf{S}_2$ 
carried out by a measuring interaction with a probe $\mathbf{P}$ 
prepared in a fixed state $|{\xi}\rangle$ from $t=0$ to $t=\tau$. 
Let $\mu_t(u,v)$ be the JPD of an observable $B_1(t)$ in $\mathbf{S}_1$ and
an observable $B_2(t)$ in $\mathbf{S}_2$ for $t=0,\tau$.
Let $\eta_O(B_2)$ be the operator-based disturbance of $\mathbf{M}$ for $B_2$.
Let $\delta_G(\mu_t)$ be the classical root-mean-square deviation determined by $\mu_t$.
Then we have the following.

(i) The relation
\begin{align*}{
|\delta_G(\mu_{\tau})-\delta_G(\mu_0)|&\le \eta_O(B_2)\le  \delta_G(\mu_{\tau})+\delta_G(\mu_0).
}\end{align*}
holds.

(ii) If $\delta_G(\mu_0)=0$ then $\delta_G(\mu_{\tau})=\eta_O(B_2)$.

(iii)  If $\delta_G(\mu_0)=0$ and $B_2^2=I$ then $\mathbf{M}$ is properly non-disturbing to
$B_2$ if and only
 $\delta_G(\mu_{\tau})=0$
\end{Theorem}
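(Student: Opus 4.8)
The plan is to reduce all three parts to one short algebraic identity among three vectors in the Hilbert space of $\mathbf{S}_1+\mathbf{S}_2+\mathbf{P}$, after which (i) and (ii) become the triangle inequality and (iii) invokes the dichotomic completeness of $\eta_O$ already cited in the text.

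First I would extract the two consequences of locality. Since the measuring interaction $U$ acts only on $\mathbf{S}_2+\mathbf{P}$, it commutes with $B_1(0)=B_1\otimes I\otimes I$, whence $B_1(\tau)=U^{\dagger}B_1(0)U=B_1(0)$. Moreover $B_1(t)$ is supported on $\mathbf{S}_1$ while $B_2(t)$ is supported on $\mathbf{S}_2+\mathbf{P}$, so the two operators commute for each $t$; hence their JPD $\mu_t$ exists unconditionally, and applying the defining relation~(\ref{eq:JPD-CP}) to the polynomial $f(u,v)=(u-v)^2$ gives
\begin{align*}{
\delta_G(\mu_t)^2=\langle{\Psi,\xi|(B_1(t)-B_2(t))^2|\Psi,\xi}\rangle
=\|(B_1(t)-B_2(t))|{\Psi,\xi}\rangle\|^2.
}\end{align*}

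Next I introduce the three vectors $|{\alpha}\rangle=(B_1(0)-B_2(0))|{\Psi,\xi}\rangle$, $|{\beta}\rangle=(B_1(0)-B_2(\tau))|{\Psi,\xi}\rangle$, and $|{\gamma}\rangle=(B_2(\tau)-B_2(0))|{\Psi,\xi}\rangle$, so that $\delta_G(\mu_0)=\||{\alpha}\rangle\|$, $\delta_G(\mu_{\tau})=\||{\beta}\rangle\|$ (using $B_1(\tau)=B_1(0)$ in the displayed formula), and $\eta_O(B_2)=\||{\gamma}\rangle\|$ directly from the definition~(\ref{eq:O-dist}). The crucial observation is the identity $|{\gamma}\rangle=|{\alpha}\rangle-|{\beta}\rangle$, which holds because the two $B_1(0)$ contributions cancel. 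Part (i) is then immediate from the two-sided triangle inequality $\bigl|\,\||{\alpha}\rangle\|-\||{\beta}\rangle\|\,\bigr|\le\||{\alpha}\rangle-|{\beta}\rangle\|\le\||{\alpha}\rangle\|+\||{\beta}\rangle\|$. Part (ii) follows by substituting $\delta_G(\mu_0)=0$ into the bounds of (i), which squeezes $\eta_O(B_2)=\delta_G(\mu_{\tau})$; equivalently, $|{\alpha}\rangle=0$ forces $|{\gamma}\rangle=-|{\beta}\rangle$.

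For part (iii) I would combine (ii) with completeness. By (ii), $\delta_G(\mu_{\tau})=0$ iff $\eta_O(B_2)=0$. Soundness of $\eta_O$ (established earlier via the CP) gives that a properly non-disturbing measurement satisfies $\eta_O(B_2)=0$; conversely, in the dichotomic case $B_2^2=I$ the completeness of $\eta_O$ (Ref.~\cite{19A1}, Theorem 3) gives that $\eta_O(B_2)=0$ implies $\mathbf{M}$ is properly non-disturbing to $B_2$. Chaining these equivalences yields (iii). I expect no serious obstacle: the one-line identity $|{\gamma}\rangle=|{\alpha}\rangle-|{\beta}\rangle$ trivializes (i) and (ii), and the only external input is the dichotomic completeness already quoted in the paper, so the care needed is purely in the bookkeeping—verifying $B_1(\tau)=B_1(0)$ from locality and confirming the cancellation that produces the identity.
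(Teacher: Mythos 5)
Your proposal is correct and follows essentially the same route as the paper's proof: both express $\delta_G(\mu_t)$ and $\eta_O(B_2)$ as norms of difference vectors in the full Hilbert space of $\mathbf{S}_1+\mathbf{S}_2+\mathbf{P}$ and apply the triangle inequality, with (iii) reduced to the dichotomic completeness of $\eta_O$ quoted from Ref.~\cite{19A1}. Your explicit identity $|{\gamma}\rangle=|{\alpha}\rangle-|{\beta}\rangle$, together with the verification that $B_1(\tau)=B_1(0)$ by locality, is simply a more careful bookkeeping of what the paper compresses into ``repeated uses of the triangular inequality.''
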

\begin{proof}
(i) We have the relations
\begin{align*}{
\delta_G(\mu_t)&=\|B_1(t)|{\Psi,\xi}\rangle-B_2(t)|{\Psi,\xi}\rangle\|\\
\eta_O(B_2)&=\|B_2(\tau)|{\Psi,\xi}\rangle-B_2(0)|{\Psi,\xi}\rangle\|,
}\end{align*}
and hence assertion (i) follows from repeated uses of the triangular inequality.

(ii) Follows by substituting $\delta_G(\mu_0)=0$ in (i).

(iii) Follows from (ii) and the completeness of $\eta_O$ for dichotomic observables.
\end{proof}

\bigskip

\subsection{Projective $\sigma_{z}^{(2)}$ measurement }\label{se:ent-1}
Consider the projective measurement of $A=\sigma_z^{(2)}$ in $\mathbf{S}_1+\mathbf{S}_2$ 
carried out by the measuring interaction
\begin{align*}{U=I\otimes |{0}\rangle\langle{0}|\otimes I +I\otimes|{1}\rangle\langle{1}|\otimes \sigma_x} \end{align*}
turned on from $t=0$ to $t=\tau$ between $\mathbf{S}_1+\mathbf{S}_2$ and 
the probe $\mathbf{P}=\mathbf{S}_3$ prepared in $|{\xi}\rangle=|{0}\rangle$
with the meter observable $M=\sigma_z^{(3)}$.
Consider the Heisenberg operators $\sigma_x^{(2)}(0)=\sigma_x^{(2)}\otimes I$ and 
$\sigma_x^{(2)}(\tau)=U^{\dagger}(\sigma_x^{(2)}\otimes I)U$ for $B=\sigma_x^{(2)}$.
From Eq.~(\ref{eq:Xtau}) we have
\begin{align*}
\sigma_x^{(1)}(0)&=\sigma_x\otimes I\otimes I,\\
\sigma_x^{(1)}(\tau)&=\sigma_x\otimes I\otimes I,\\
\sigma_x^{(2)}(0)&=I\otimes\sigma_x\otimes I,\\
\sigma_x^{(2)}(\tau)&=I\otimes \sigma_x\otimes\sigma_x.
\end{align*}
Let $\mu_t$ be the JPD of $\sigma_x^{(1)}(t)$ and $\sigma_x^{(2)}(t)$ in the state 
$|{\Phi^{+},0}\rangle=|{\Phi^{+}}\rangle\otimes|{0}\rangle$.
We shall show 

(i) $\mu_0(u,v)=\dfrac{1}{2}\delta_{u,v}$,\vspace{4pt}

(ii) $\mu_\tau(u,v)=\dfrac{1}{4}$,

(iii) $\delta_G(\mu_{\tau})=\eta_O(\sigma_x^{(2)})=\sqrt{2}$.

We have
\begin{align*}
\mu_0(u,v)&=\langle{\Phi^{+},0|P^{\sigma_x^{(1)}(0)}(u)P^{\sigma_x^{(2)}(0)}(v)|\Phi^{+},0}\rangle
=\langle{\Phi^{+}|P^{\sigma_x}(u)\otimes P^{\sigma_x}(v)|\Phi^{+}}\rangle\\
&=\frac{1}{2}\langle{0_x0_x|P^{\sigma_x}(u)\otimes P^{\sigma_x}(v)|0_x0_x}\rangle
+\frac{1}{2}\langle{1_x1_x|P^{\sigma_x}(u)\otimes P^{\sigma_x}(v)|0_x0_x}\rangle\\
&\quad
+\frac{1}{2}\langle{0_x0_x|P^{\sigma_x}(u)\otimes P^{\sigma_x}(v)|1_x1_x}\rangle
+\frac{1}{2}\langle{1_x1_x|P^{\sigma_x}(u)\otimes P^{\sigma_x}(v)|1_x1_x}\rangle\\
&=\dfrac{1}{2}\delta_{u,v},
\end{align*}
and (i) follows.

We have
\begin{align*}
\lefteqn{
P^{\sigma_x\otimes I\otimes I}(\pm 1)P^{I\otimes \sigma_x\otimes\sigma_x}(+1)
}\quad\\
&=
(P^{\sigma_x}(\pm 1)\otimes I\otimes I)
[I\otimes P^{\sigma_x}(+1)\otimes P^{\sigma_x}(+1)
+I\otimes P^{\sigma_x}(-1)\otimes P^{\sigma_x}(-1)]\\
&=
P^{\sigma_x}(\pm1)\otimes P^{\sigma_x}(+1)\otimes P^{\sigma_x}(+1)
+P^{\sigma_x}(\pm1)\otimes P^{\sigma_x}(-1)\otimes P^{\sigma_x}(-1),\\
\lefteqn{
P^{\sigma_x\otimes I\otimes I}(\pm 1)P^{I\otimes \sigma_x\otimes\sigma_x}(-1)
}\quad\\
&=
(P^{\sigma_x}(\pm 1)\otimes I\otimes I)
[I\otimes P^{\sigma_x}(+1)\otimes P^{\sigma_x}(-1) 
+ I\otimes P^{\sigma_x}(-1)\otimes P^{\sigma_x}(+1)]\\
&=
P^{\sigma_x}(\pm1)\otimes P^{\sigma_x}(+1)\otimes P^{\sigma_x}(-1)
+P^{\sigma_x}(\pm1)\otimes P^{\sigma_x}(-1)\otimes P^{\sigma_x}(+1).
\end{align*}
Consequently,
\begin{align*}{
\lefteqn{\sqrt{2}P^{\sigma_x\otimes I\otimes I}(+ 1)P^{I\otimes \sigma_x\otimes\sigma_x}(+1)|{\Phi^{+},0}\rangle}\quad\\
&=
P^{\sigma_x}(+1)\otimes P^{\sigma_x}(+1)\otimes P^{\sigma_x}(+1)|{0_x0_x0}\rangle
+P^{\sigma_x}(+1)\otimes P^{\sigma_x}(+1)\otimes P^{\sigma_x}(+1)|{1_x1_x0}\rangle\\
&\quad +P^{\sigma_x}(+1)\otimes P^{\sigma_x}(-1)\otimes P^{\sigma_x}(-1)|{0_x0_x0}\rangle
+P^{\sigma_x}(+1)\otimes P^{\sigma_x}(-1)\otimes P^{\sigma_x}(-1)|{1_x1_x0}\rangle\\
&=|{0_x0_x}\rangle\otimes P^{\sigma_x}(+ 1)|{0}\rangle.
}\end{align*}
Similarly,
\begin{align*}{
\sqrt{2}P^{\sigma_x\otimes I\otimes I}(-1)P^{I\otimes \sigma_x\otimes\sigma_x}(+1)|{\Phi^{+},0}\rangle
&=|{1_x1_x}\rangle\otimes P^{\sigma_x}(- 1)|{0}\rangle,\\
\sqrt{2}P^{\sigma_x\otimes I\otimes I}(+1)P^{I\otimes \sigma_x\otimes\sigma_x}(-1)|{\Phi^{+},0}\rangle
&=|{0_x0_x}\rangle\otimes P^{\sigma_x}(- 1)|{0}\rangle,\\
\sqrt{2}P^{\sigma_x\otimes I\otimes I}(-1)P^{I\otimes \sigma_x\otimes\sigma_x}(-1)|{\Phi^{+},0}\rangle
&=|{1_x1_x}\rangle\otimes P^{\sigma_x}(+1)|{0}\rangle.
}\end{align*}
Thus, we have
\begin{align*}
\mu_{\tau}(u,v)=\|P^{\sigma_x\otimes I\otimes I}(u)P^{I\otimes \sigma_x\otimes\sigma_x}(v)|{\Phi^{+},0}\rangle\|^2
&=\frac{1}{4}.
\end{align*}
for any $u,v=\pm 1$, and (ii) follows.  Thus, Eq.~(\ref{eq:ent-1-1}) is obtained.

From (ii), $\delta_G(\mu_{\tau})=\sqrt{2}$ follows.  From (i), $\delta_G(\mu_0)=0$,
so that it follows from Theorem \ref{th:chain-rule} (ii) that
$\eta_O(\sigma_x^{(2)})=\delta_G(\mu_{\tau})=\sqrt{2}$.  Thus, (iii) follows
and Eq.~(\ref{eq:ent-1-2}) is obtained.

\subsection{Projective $\sigma_{\theta}^{(2)}$ measurement}\label{se:ent-3}
Suppose that the observer makes 
a measurement $\mathbf{M}(\theta)$
 of $A=\sigma_{\theta}^{(2)}$ in $\mathbf{S}_1+\mathbf{S}_2$,
carried out by the measuring interaction 
$$
U=I\otimes P^{\sigma_{\theta}}(+1)\otimes I+I\otimes P^{\sigma_{\theta}}(-1)\otimes \sigma_x
$$ 
turned on from $t=0$ to
$t=\tau$ and by the subsequent measurement of the meter observable $M=\sigma_z^{(3)}$ of $\mathbf{P}=\mathbf{S}_3$ 
prepared in $|{\xi}\rangle=|{0}\rangle$.
This realizes the projective measurement of $A=\sigma_{\theta}^{(2)}$ as 
\begin{align*}
U(|{\alpha}\rangle\otimes|{0_{\theta}}\rangle\otimes|{0}\rangle)&=|{\alpha}\rangle\otimes|{0_{\theta}}\rangle\otimes|{0}\rangle,\\
U(|{\alpha}\rangle\otimes|{1_{\theta}}\rangle\otimes|{0}\rangle)&=|{\alpha}\rangle\otimes|{1_{\theta}}\rangle\otimes|{1}\rangle,
\end{align*}
for $\alpha=0,1$, where $|{0_{\theta}}\rangle:=|{\sigma_{\theta}=+1}\rangle$ and $|{1_{\theta}}\rangle:=|{\sigma_{\theta}=-1}\rangle$.
Consider the Heisenberg operators $B(0)=\sigma_x^{(2)}(0)=\sigma_x^{(2)}\otimes I$ and 
$B(\tau)=\sigma_x^{(2)}(\tau)=U^{\dagger}(\sigma_x^{(2)}\otimes I)U$ for $B=\sigma_x^{(2)}$.
We have
\begin{align*}{
\sigma_x^{(1)}(0)&=\sigma_x\otimes I\otimes I,\\
\sigma_x^{(1)}(\tau)&=\sigma_x\otimes I\otimes I,\\
\sigma_x^{(2)}(0)&=I\otimes\sigma_x\otimes I,\\
\sigma_x^{(2)}(\tau)&=I\otimes\sin\theta\sigma_{\theta}\otimes I+I\otimes\cos\theta\sigma_{-\theta}\otimes\sigma_x.
}\end{align*}

Let $\mu_t$ for $t=0,\tau$ be
the JPD of $\sigma_x^{(1)}(t)$ and $\sigma_x^{(2)}(t)$ in the state 
$|{\Phi^{+},0}\rangle$, i.e., 
\begin{align*}
\mu(u,v)&=\langle{\Phi^{+},0|P^{\sigma_x^{(1)}(t)}(u)P^{\sigma_x^{(2)}(t)}(v)|\Phi^{+},0}\rangle.
\end{align*}
Then we have 
$$
\mu_{0}(u,v)=\frac{1}{2}\delta_{u,v}.
$$

For $u=\pm 1$ we have
\begin{align*}{
\lefteqn{\|P^{\sigma_x^{(2)}(\tau)}(u)|{\Phi^{+},0}\rangle\|^2}\quad\\
&=\|P^{\sigma_x^{(2)}(0)}(u)U|{\Phi^{+},0}\rangle\|^2\\
&=\frac{1}{2}(\|P^{\sigma_x}(u)P^{\sigma_{\theta}}(+1)|{0}\rangle\|^2+\|P^{\sigma_x}(u)P^{\sigma_{\theta}}(-1)|{0}
\rangle\|^2
+\|P^{\sigma_x}(u)P^{\sigma_{\theta}}(+1)|{1}\rangle\|^2\\
&\quad+\|P^{\sigma_x}(u)P^{\sigma_{\theta}}(-1)|{1}\rangle\|^2)\\
&=\frac{1}{4}(\|P^{\sigma_x}(u)|{0}\rangle\|^2+\|P^{\sigma_x}(u)\sigma_{\theta}|{0}\rangle\|^2
+\|P^{\sigma_x}(u)|{1}\rangle\|^2+\|P^{\sigma_x}(u)\sigma_{\theta}|{1}\rangle\|^2)\\
&=\frac{1}{4}(\|P^{\sigma_x}(u)|{0_x}\rangle\|^2
+\|P^{\sigma_x}(u)|{1_x}\rangle\|^2
+\|P^{\sigma_x}(u)\sigma_{\theta}|{0_x}\rangle\|^2
+\|P^{\sigma_x}(u)\sigma_{\theta}|{1_x}\rangle\|^2)\\
&=\frac{1}{2}.
}\end{align*}
We used the parallelogram law twice in the third last and the penultimate equalities.
It follows that
\begin{align*}{
\sum_{u}\, \mu_{\tau}(u,v)=\frac{1}{2}.
}\end{align*}
Thus, 
with $\sum_{u}\, \mu_{0}(u,v)=1/2$,
the projective measurement of $\sigma_{\theta}^{(2)}$ is
distributionally non-disturbing to $\sigma_x^{(2)}$.

Let $\delta_{G}(\mu_t)$ be the classical root-mean-square deviation
for $\mu_t$.  We have $\delta_{G}(\mu_0)=0$.
By Theorem \ref{th:chain-rule} (ii) we have
$
\delta_{G}(\mu_\tau)=\eta_O(\sigma_x^{(2)}).
$
Then from (Ref.~\cite{12EDU}, Eq.~(6)) we have
\begin{align*}{
\eta_O(\sigma_x^{(2)})^2
=\sum_{y}\|[P^{\sigma_{\theta}^{(2)}}(v),\sigma_x^{(2)}]|{\Phi^{+}}\rangle\|^2
=2\|[\sigma_{\theta}^{(2)}/2,\sigma_x^{(2)}]|{\Phi^{+}}\rangle\|^2
=2\cos^2\theta.
} \end{align*}
Thus, we obtain Eq.~(\ref{eq:entanglement-4}), i.e.,
\begin{align*}{
\delta_{G}(\mu_\tau)=\eta_O(\sigma_x^{(2)})=\sqrt{2}\cos\theta.
}\end{align*}

In what follows we will determine $\mu_\tau$
without tedious calculations on relevant projections. 
We have
\begin{align*}{
\sum_{u,v:u\ne v}\mu_{\tau}(u,v)
=\frac{1}{4}\delta_{G}(\mu_{\tau})
=\frac{1}{2}\cos^2\theta.
}\end{align*}
Since $\sigma_x^{(1)}(0)=\sigma_x^{(1)}(\tau)$ and  
$\mathbf{M}(\theta)$ is distributionally non-disturbing to
$\sigma_x^{(2)}$, we have
\begin{align*}{
\sum_{v}\mu_{\tau}(u,v)&=\sum_{v}\mu_{0}(u,v)=\frac{1}{2},\\
\sum_{u}\mu_{\tau}(u,v)&=\sum_{u}\mu_{0}(u,v)=\frac{1}{2}.
}\end{align*}
Since, 
\begin{align*}{
\sum_{u,v:u\ne v}\mu_{\tau}(u,v)+\sum_{v}\mu_{\tau}(+1,v)
&=\mu_{\tau}(+1,-1)\!+\!\mu_{\tau}(-1,+1)\!+\!\mu_{\tau}(+1,-1)\!
+\!\mu_{\tau}(+1,+1)\quad\\ 
&=2\mu_{\tau}(+1,-1)+\sum_{u}\mu_{\tau}(u,+1), 
}\end{align*}
we obtain
\begin{align*}{
\mu_{\tau}(+1,-1)=\frac{1}{2}\sum_{u,v:u\ne v}\mu_{\tau}(u,v)=\frac{1}{4}\cos^2\theta.
}\end{align*}
It follows that 
\begin{align*}{
\mu_{\tau}(+1,-1)&=\mu_{\tau}(-1,+1)=\frac{1}{4}\cos^2\theta,\\
\mu_{\tau}(+1,+1)&=\mu_{\tau}(-1,-1)=\frac{1}{4}(1+\sin^2\theta).
}\end{align*}
Therefore, we have derived Eq.~(\ref{eq:entanglement-3}).
\bigskip

\renewcommand{\&}{and}
\setlength{\itemsep}{0in}

\end{document}